\definecolor{BestColor}{RGB}{255,50,50}   
\definecolor{SecondColor}{RGB}{255,255,0} 
\definecolor{ThirdColor}{RGB}{50,205,50} 
\newtheorem{theorem}{Theorem}[section]
\newtheorem{remark}[theorem]{Remark}
\newtheorem{proposition}[theorem]{Proposition}
\newtheorem{definition}{Definition}[section]
\newtheorem{lemma}[theorem]{Lemma}
\def\@email#1#2{%
 \endgroup
 \patchcmd{\titleblock@produce}
  {\frontmatter@RRAPformat}
  {\frontmatter@RRAPformat{\produce@RRAP{*#1\href{mailto:#2}{#2}}}\frontmatter@RRAPformat}
  {}{}
}%
\begin{document}

\title[Community detection of hypergraphs by Ricci flow]{Community detection of hypergraphs by Ricci flow}
\author{Yulu Tian}
\affiliation{ 
School of Mathematical Sciences, Key Laboratory of Mathematics and Complex Systems of MOE,\\
Beijing Normal University, Beijing, 100875, China
}%
\author{Jicheng Ma}%
\affiliation{ 
School of Mathematics, Renmin University of China, Beijing, 100872, China
}%
\author{Yunyan Yang}%
\affiliation{ 
School of Mathematics, Renmin University of China, Beijing, 100872, China
}%
\author{Liang Zhao}
\altaffiliation{%
	Corresponding author
}
\affiliation{%
School of Mathematical Sciences, Key Laboratory of Mathematics and Complex Systems of MOE,\\
Beijing Normal University, Beijing, 100875, China
}%
\email{liangzhao@bnu.edu.cn}

\date{\today}

\begin{abstract}
Community detection in hypergraphs is both instrumental for functional module identification and intricate due to higher-order interactions among nodes. We define a hypergraph Ricci flow that directly operates on higher-order interactions of hypergraphs and prove long-time existence of the flow. Building on this theoretical foundation, we develop HyperRCD-a Ricci-flow-based community detection approach that deforms hyperedge weights through curvature-driven evolution, which provides an effective mathematical representation of higher-order interactions mediated by weighted hyperedges between nodes. Extensive experiments on both synthetic and real-world hypergraphs demonstrate that HyperRCD exhibits remarkable enhanced robustness to topological variations and competitive performance across diverse datasets.
\end{abstract}

\maketitle

\section{Introduction}

Graphs (directed or undirected) have served as powerful mathematical models for real-world networks, demonstrating remarkable success in both theoretical and applied domains through graph-theoretic analysis. However, many real-world systems involve higher-order interactions that transcend pairwise relationships between nodes. Hypergraphs naturally generalize graph theory by permitting edges (called hyperedges) to connect any number of vertices, thus providing a more appropriate framework for modeling such higher-order interactions. This mathematical structure has proven particularly valuable across scientific and engineering disciplines. For example, directed hypergraphs effectively model chemical reaction networks \cite{Jost2019hypergraph}, where vertices represent chemical elements and hyperedges correspond to reactions involving multiple elements. Similarly, undirected hypergraphs can capture structures of social networks \cite{Yu2021self,Yu2023self}, with vertices representing individuals and hyperedges encoding group relationships. These applications demonstrate how hypergraph theory offers a rigorous foundation for analyzing complex systems with multi-way interactions.

From a mathematical perspective, the geometric structures of graphs and hypergraphs, particularly the notion of curvature, play a fundamental role. This viewpoint has gained significant attention in recent years, yielding substantial theoretical advances and applications. In \cite{Ollivier2009ricci}, Ollivier defined a notion of Ricci curvature on metric spaces, including graphs, equipped with a Markov chain. Lin and Yau presented a generalization of the lower Ricci curvature bound within the framework of undirected graphs in \cite{Lin2010Ricci}, and Lin-Lu-Yau \cite{Lin2011Ricci} studied a modified definition of Ollivier’s Ricci curvature on graphs. 
In \cite{Bai2021sum}, Bai, Huang, Lu and Yau utilized a limit-free version of curvature called star coupling Ricci curvature to reduce the number of free parameters. For further studies of Ollivier’s Ricci curvature in the undirected graph setting, we refer to \cite{,Bauer2012ollivier,Benson2021volume,Devriendt2022discrete,Jost2014ollivier,Munch2019ollivier} and the references therein. For directed graphs, Yamada \cite{Yamada2019ricci} first proposed a generalization of Lin-Lu-Yau Ricci curvature, calculating it for several specific examples and providing various estimates. Ozawa, Sakurai, and Yamada \cite{Ozawa2020geometric} introduced a new generalization for strongly connected directed graphs by utilizing the mean transition probability kernel, which is involved in the formulation of the Chung Laplacian. 

Following graph curvature theories, there have been growing efforts to develop analogous tools for the Ricci curvature of hypergraphs in recent years. Eidi and Jost introduced Ollivier-Ricci curvature of directed hypergraphs in \cite{Eidi2020ollivier}, which was used to analyze certain chemical reaction models in \cite{Eidi2020edge}. Leal et al.  \cite{Leal2021forman}, defined Forman–Ricci curvature for directed and undirected hypergraphs, recovering the curvature for graphs as a special case, and determined the upper and lower bounds of Forman-Ricci curvature both for hypergraphs and their special case of graphs. Leal, Eidi and Jost \cite{Leal2020ricci} introduced a random model to generate directed hypergraphs and studied properties such as the degree of nodes and edge curvature using numerical simulations. By applying hypergraph shuffling to the metabolic network of \textit{E. coli}, they demonstrated that changes in the wiring of a hypergraph can be detected by Forman-Ricci and Ollivier-Ricci curvatures. In \cite{Coupette2022ollivier}, the authors developed ORCHID (Ollivier Ricci Curvature for Hypergraphs In Data) of undirected, unweighted hypergraphs based on random walks on the nodes, which provided a flexible framework generalizing Ollivier-Ricci curvature to hypergraphs.

The Ricci flow
\begin{equation}\label{Ricci flow}
	\left\{
	\begin{aligned}
		&\  \frac{\partial}{\partial t}g=-2\text{Ric}(g),\\
		&\  g(0)=g_0,
	\end{aligned}
	\right.
\end{equation}
introduced by Hamilton in 1982 \cite{Hamilton1982three}, is a fundamental tool in geometric analysis that evolves Riemannian metrics through their Ricci curvature. This method has proven exceptionally powerful for extracting key geometric and topological properties of manifolds, most notably in Perelman's proof of the three-dimensional Poincaré conjecture \cite{Perelman2002entropy,Perelman2003finite,Perelman2003ricci}. As discrete analogues of their continuum counterparts, Ricci curvature and Ricci flow on graphs have garnered significant research interest, evolving into both a rigorous theoretical framework and a versatile tool with multidisciplinary applications \cite{Jin2008discrete,Sarkar2009greedy,Shepherd2022feature}. Among various applications, the innovative approach of community detection via discrete Ricci flow has attracted significant research attention. Here, community detection refers to the identification of densely interconnected vertex subsets with sparse connections between groups. For comprehensive background and alternative methods in community detection of graphs, we refer readers to \cite{Camacho2020four, Fortunato2010community, Fang2020survey, Leskovec2010empirical, Li2020community, Rossetti2018community, Su2022comprehensive}. In \cite{Ni2019community}, Ni et al. found that discrete Ollivier-Ricci flow can be used to detect community structures of networks. Sia et al. \cite{Sia2019ollivier} constructed a community detection algorithm by progressively removing negatively curved edges. In \cite{Lai2022normalized}, Lai et al. leveraged a normalized flow based on discrete Ricci curvature, to deform a graph, and as a result, intracommunity nodes became closer while intercommunity nodes dispersed. Recently, Bai et al. \cite{Bai2024ollivier} studied a Ricci flow based on Lin-Yu-Yau Ricci curvature and investigated the existence and uniqueness properties of its solutions. In \cite{Cushing2023bakry,Cushing2025bakry}, the authors introduced a Bakry-{\'E}mery Ricci flow on weighted graphs, which preserves the Markovian property, and the graphs turn out to be curvature-sharp as time goes to infinity. Later, Hua, Lin and Wang \cite{Hua2024version} studied the local existence and uniqueness, and obtained results on long-time convergence or finite time blow-up for the Bakry-{\'E}mery Ricci flow on finite trees and circles.

Community detection in hypergraphs remains an active research area. The inherent complexity of higher-order interactions makes this problem particularly challenging. Recent methodological advances include optimal statistical limit \cite{Chien2018community}, regularized tensor power iteration \cite{Ke2019community},random walk approaches \cite{Carletti2021random}, a probabilistic framework \cite{Ruggeri2023community}, and mutual information maximization \cite{Kritschgau2024community}, etc. On the other hand, the strong geometric representational power of Ricci curvature makes it particularly well-suited for analyzing hypergraph structures. For example, \cite{Hacquard2024hypergraph} proposed an edge transport perspective of hypergraph community detection, where Ricci curvature of hyperedges serves as a geometric criterion.

In light of the aforementioned work, the key contributions of this paper are : (i) Construct a hypergraph Ricci flow framework with necessary theoretical foundations. (ii) Develop a hypergraph community detection algorithm that directly operates on higher-order interactions of hypergraphs, bypassing graph reduction techniques (e.g., clique expansion or star expansion) that may incur information loss. (iii) Validate the method's effectiveness through experiments on synthetic and real-world networks.


\section{Ollivier–Ricci curvature of hypergraphs}\label{basic}

Let $\Gamma=(V,H,\bm{w})$ be a weighted undirected hypergraph, where $V=(x_1,x_2,\cdots,x_n)$ is vertex set, $H=(h_1,h_2,\cdots,h_m)$ is the hyperedge set, and $\bm{w}=(w_{h_1},w_{h_2},\cdots,w_{h_m})\in\mathbb{R}^m_+$ is the vector of weights on hyperedges. When referring to generic vertices without specification, we omit subscripts and denote nodes by $u$ and $v$. For two vertices $u, v\in V$, if there exists a hyperedge $h_l\in H$ such that both of them are contained within $h_l$, we say that $u$ and $v$ are adjacent and denote this relationship as $u\sim v$. Furthermore, if there is a sequence of hyperedges $\{h_{l_1},h_{l_2},\cdots,h_{l_s}\}$ ($s\leq m$) such that $u\in h_{l_1}$, $v\in h_{l_s}$ and $h_{l_k}\cap h_{l_{k+1}}\neq\emptyset$ for $1\leq k\leq s-1$, then vertices $u$ and $v$ are connected by the hyperpath $\gamma=\{h_{l_1},h_{l_2},\cdots,h_{l_s}\}$. If each pair of two different vertices in a hypergraph $\Gamma$ can be connected by a hyperpath, we call $\Gamma$ a connected hypergraph. Unless otherwise specified, any mention of hypergraphs refers to undirected, connected hypergraphs so that we can provide a definition of distance. The distance between two vertices $u, v\in V$ is given by
\begin{equation}\label{distance def}
	d(u,v):=\inf_\gamma\sum_{h\in\gamma}w_h,
\end{equation}
where the infimum is taken over all paths $\gamma$ connecting $u$ and $v$. Since hyperedges can connect multiple vertices, we need to establish a definition of length for a hyperedge $h_l=\{x_{l_1},x_{l_2},\cdots,x_{l_s}\}\in H$ as follow
\begin{equation}\label{distance hyperedge}
	d(h):=\sum_{1\leq i<j\leq s}d(x_{l_i},x_{l_j}).
\end{equation}
For the case $k=2$, which indicates that the (hyper)edge contains only two vertices, the length of $h=\{u,v\}$ is $d(h)=d(u,v)$. By \cite[Lemma 2.1.]{Ma2024modified}, we have the following lemma concerning distance.
\begin{lemma}\label{distence lem}
	For any two fixed vertices $u, v\in V$, the distance $d(u,v)$ is locally Lipschitz in $\mathbb{R}^m_+$ with respect to $\bm{w}=(w_{h_1},w_{h_2},\cdots,w_{h_m})$, i.e., for any $\bm{w},  \widetilde{\bm{w}}\in\mathbb{R}^m_+$, if $d$ and $\widetilde{d}$ are two distance functions determined by $\bm{w}$ and
	$\widetilde{\bm{w}}$ respectively, then there holds
	\begin{equation*}
		\vert d(u,v)-\widetilde{d}(u,v)\vert\leq\sqrt{m}\vert\bm{w}-\widetilde{\bm{w}}\vert.
	\end{equation*}
\end{lemma}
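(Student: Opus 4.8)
The plan is to exploit the fact that the infimum in \eqref{distance def} is actually a minimum over a \emph{finite} family of hyperpaths, and that each such minimizer may be taken to use every hyperedge at most once. First I would observe that if a hyperpath $\gamma$ connecting $u$ and $v$ repeats a hyperedge, then the portion of $\gamma$ between the two occurrences can be excised to obtain a shorter hyperpath still connecting $u$ and $v$; since all components of a weight vector in $\mathbb{R}^m_+$ are strictly positive, this strictly decreases the total weight. Hence for every weight vector the distance is realized by a simple hyperpath $\gamma^\ast$ with $|\gamma^\ast|\le m$, and because there are only finitely many simple hyperpaths, $d(u,v)$ is a genuine minimum.

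Next, fix $\bm{w},\widetilde{\bm{w}}\in\mathbb{R}^m_+$ and let $\widetilde{\gamma}$ be a simple hyperpath realizing $\widetilde{d}(u,v)$. Then, since $\widetilde{\gamma}$ is an admissible competitor for $d$,
\begin{equation*}
d(u,v)\le\sum_{h\in\widetilde{\gamma}}w_h=\sum_{h\in\widetilde{\gamma}}\widetilde{w}_h+\sum_{h\in\widetilde{\gamma}}(w_h-\widetilde{w}_h)\le\widetilde{d}(u,v)+\sum_{h\in\widetilde{\gamma}}\lvert w_h-\widetilde{w}_h\rvert.
\end{equation*}
Applying the Cauchy--Schwarz inequality to the last sum over the $|\widetilde{\gamma}|\le m$ hyperedges of $\widetilde{\gamma}$ gives
\begin{equation*}
\sum_{h\in\widetilde{\gamma}}\lvert w_h-\widetilde{w}_h\rvert\le\sqrt{\lvert\widetilde{\gamma}\rvert}\,\Bigl(\sum_{h\in\widetilde{\gamma}}\lvert w_h-\widetilde{w}_h\rvert^2\Bigr)^{1/2}\le\sqrt{m}\,\lvert\bm{w}-\widetilde{\bm{w}}\rvert,
\end{equation*}
so that $d(u,v)-\widetilde{d}(u,v)\le\sqrt{m}\,\lvert\bm{w}-\widetilde{\bm{w}}\rvert$.

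Finally I would interchange the roles of $\bm{w}$ and $\widetilde{\bm{w}}$: taking a simple minimizing hyperpath for $d(u,v)$ and running the identical estimate yields $\widetilde{d}(u,v)-d(u,v)\le\sqrt{m}\,\lvert\bm{w}-\widetilde{\bm{w}}\rvert$. Combining the two inequalities proves $\lvert d(u,v)-\widetilde{d}(u,v)\rvert\le\sqrt{m}\,\lvert\bm{w}-\widetilde{\bm{w}}\rvert$. The only step requiring genuine care — and the one I expect to be the main obstacle — is the reduction to simple hyperpaths: one must verify that the loop-removal operation still leaves a valid hyperpath (consecutive hyperedges continue to intersect) and that it strictly decreases the total weight, which is precisely where the strict positivity of $\bm{w}$ enters; the rest of the argument is elementary. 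Since the statement is quoted from \cite[Lemma 2.1.]{Ma2024modified}, one may alternatively simply invoke that reference for the details.
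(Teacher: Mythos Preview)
Your argument is correct. The paper itself does not prove this lemma at all: it simply states the result and attributes it to \cite[Lemma~2.1]{Ma2024modified}, so there is no in-paper proof to compare against. Your explicit argument via reduction to simple hyperpaths, using the optimal hyperpath for one weight vector as a competitor for the other and then applying Cauchy--Schwarz, is the standard way to obtain this bound and is presumably what the cited reference does as well. One minor remark: in the paper's definition of a hyperpath the hyperedges are already written as a set $\{h_{l_1},\dots,h_{l_s}\}$ with $s\le m$, so hyperpaths are implicitly simple from the outset; your loop-removal step is therefore not strictly needed under this convention, though it does no harm and makes the argument self-contained.
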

As the foundation for developing curvature flows on hypergraphs, we first present a rigorous definition of the Ollivier-Ricci curvature of hyperedges, which will serve as the key geometric quantity in subsequent evolutionary equations.
\begin{definition}\label{curvature def}
	For a weighted hypergraph $\Gamma=(V,H,\bm{w})$, $\forall h_l=\{x_{l_1},x_{l_2},\cdots,x_{l_s}\}\in H$, we define the Ollivier-Ricci curvature of the hyperedge $h_l$ as
	\begin{equation}\label{curvature eq}
		\kappa_\alpha(h_l):=1-\frac{W_{h_l}}{d(h_l)},
	\end{equation}
	where $\alpha\in[0, 1]$, the probability measure $\mu_{x_{l_i}}^\alpha$ of $x_{l_i}\in h_l$ is defined on $V$ as follows
	\begin{equation}\label{measure eq}
		\mu_{x_{l_i}}^\alpha(z):=\left\{
		\begin{aligned}
			&\  \alpha, &&\text{if } z=x_{l_i},\\
			&\  (1-\alpha)\sum_{h': x_{l_i},z\in h'}\frac{1}{\vert h'\vert-1}\frac{w_{h'}}{\sum\limits_{h'':x_{l_i}\in h''}w_{h''}}, &&\text{if } z\neq x_{l_i} \text{ and } z\sim x_{l_i},\\
			&\  0, &&\text{otherwise},
		\end{aligned} \qquad \forall z\in V,
		\right.
	\end{equation}
	$W_{h_l}:=\sum_{1\leq i<j\leq s}W(\mu_{x_{l_i}}^\alpha,\mu_{x_{l_j}}^\alpha)$ and $W(\mu_{x_{l_i}}^\alpha,\mu_{x_{l_j}}^\alpha)$ is the $1$-Wasserstein distance between these two discrete measures defined as follows
	\begin{equation}\label{Wasserstein def}
		W(\mu_{x_{l_i}}^\alpha,\mu_{x_{l_j}}^\alpha):=\inf_{A}\sum_{u,v\in V}A(u,v)d(u,v),
	\end{equation}
	where $A: V\times V\to [0,1]$ runs over all maps satisfying
	\begin{equation*}
		\left\{
		\begin{aligned}
			&\  \sum_{v\in V}A(u,v)=\mu_{x_{l_i}}^\alpha(u),\\
			&\  \sum_{u\in V}A(u,v)=\mu_{x_{l_j}}^\alpha(v).
		\end{aligned}
		\right.
	\end{equation*}
\end{definition}

\begin{remark}\label{graph Lip re}
	For a weighted hypergraph $\Gamma=(V,H,\bm{w})$, we note that:
	
	(i) If $k=2$, i.e., there is a hyperedge has exactly two vertices $h=\{u,v\}$, the Ollivier-Ricci curvature (\ref{curvature eq}) formally coincides with the graph edge curvature as follows
	\begin{equation*}
		\kappa_\alpha(h):=1-\frac{W(\mu_u^\alpha,\mu_v^\alpha)}{d(u,v)}.
	\end{equation*}
	However, the probability measure (\ref{measure eq}) may differ from those in the graph case unless all hyperedges contain exactly two vertices.
	
	(ii) By Definition \ref{curvature def} and \cite{Ma2024evolution}, it is clear that the probability measure $\mu_{x_i}^\alpha$ is locally Lipschitz in $\mathbb{R}^m_+$ with respect to $\bm{w}$ for any $x_i\in V$.
\end{remark}
We next introduce the concept of $1$-Lipschitz functions on hypergraphs and a fundamental principle in optimal transport theory, namely the Kantorovich-Rubinstein duality.
\begin{definition}
	Let $\Gamma=(V,H,\bm{w})$ be a weighted hypergraph. A function $f: V\to\mathbb{R}$ is said to be $1$-Lipschitz if
	\begin{equation*}
		\vert f(u)-f(v)\vert\leq d(u, v),
	\end{equation*}
	for all $u,v\in V$. The set of all $1$-Lipschitz functions on $V$ is denoted by $1$-Lip.
\end{definition}

\begin{proposition}\label{Kantorovich duality}
	\emph{(Kantorovich-Rubinstein duality, \cite{Bourne2018ollivier, Villani2021topics})} Let $\Gamma=(V,H,\bm{w})$ be a weighted hypergraph and  $\mu_{x_i}^\alpha, \mu_{x_j}^\alpha$ be two probability measures on $V$. Then
	\begin{equation*}
		W(\mu_{x_i}^\alpha,\mu_{x_j}^\alpha)=\sup_{\phi\in 1\text{-Lip}}\sum_{z\in V}\phi(z)\left(\mu_{x_i}^\alpha(z)-\mu_{x_j}^\alpha(z)\right).
	\end{equation*}
\end{proposition}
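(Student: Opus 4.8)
The plan is to recognize that, because $V$ is finite, $W(\mu_{x_i}^\alpha,\mu_{x_j}^\alpha)$ is the value of a finite linear program and to apply linear programming strong duality. The transport plan $A$ lives in $\mathbb{R}^{n\times n}$, the objective $\sum_{u,v}A(u,v)d(u,v)$ is linear, and the marginal constraints together with $A\ge 0$ define a nonempty compact polytope (the product coupling $A(u,v)=\mu_{x_i}^\alpha(u)\mu_{x_j}^\alpha(v)$ is feasible since both measures are probability measures), so the infimum in \eqref{Wasserstein def} is attained and equals the value of the dual program. Assigning multipliers $\phi(u)$ to the constraints $\sum_v A(u,v)=\mu_{x_i}^\alpha(u)$ and $\psi(v)$ to $\sum_u A(u,v)=\mu_{x_j}^\alpha(v)$, the dual reads
\[
	\max\Big\{\,\textstyle\sum_{u}\phi(u)\mu_{x_i}^\alpha(u)+\sum_{v}\psi(v)\mu_{x_j}^\alpha(v)\ :\ \phi(u)+\psi(v)\le d(u,v)\ \ \forall u,v\in V\,\Big\}.
\]

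The second step is to reduce the dual feasible set to pairs of the form $(-g,g)$ with $g\in 1\text{-Lip}$ via the $d$-transform. Given dual-feasible $(\phi,\psi)$, set $\phi^{d}(v):=\min_{u\in V}\big(d(u,v)-\phi(u)\big)$. Feasibility forces $\psi\le\phi^{d}$ pointwise, and since $\mu_{x_j}^\alpha\ge 0$ the substitution $\psi\mapsto\phi^{d}$ does not decrease the objective; moreover $\phi^{d}\in 1\text{-Lip}$ because the triangle inequality gives $|\phi^{d}(v)-\phi^{d}(v')|\le\max_u|d(u,v)-d(u,v')|\le d(v,v')$. Iterating once more with $\psi^{d}(u):=\min_{v}\big(d(u,v)-\phi^{d}(v)\big)$, one checks $\psi^{d}\ge\phi$ pointwise (take $v$ arbitrary and bound $\phi^{d}(v)\le d(u,v)-\phi(u)$), so $\phi\mapsto\psi^{d}$ again does not decrease the objective, while for the metric cost the $1$-Lipschitz property of $\phi^{d}$ makes the minimum in $\psi^{d}(u)$ attained at $v=u$, whence $\psi^{d}=-\phi^{d}$. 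Thus the dual optimum is attained on a pair $(-g,g)$ with $g:=\phi^{d}\in 1\text{-Lip}$, with objective value $\sum_{z}g(z)\big(\mu_{x_j}^\alpha(z)-\mu_{x_i}^\alpha(z)\big)$.

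Finally, since $1\text{-Lip}$ is invariant under $g\mapsto -g$, taking the supremum over $g\in 1\text{-Lip}$ of $\sum_{z}g(z)\big(\mu_{x_j}^\alpha(z)-\mu_{x_i}^\alpha(z)\big)$ equals $\sup_{\phi\in 1\text{-Lip}}\sum_{z\in V}\phi(z)\big(\mu_{x_i}^\alpha(z)-\mu_{x_j}^\alpha(z)\big)$, which combined with the equality $W(\mu_{x_i}^\alpha,\mu_{x_j}^\alpha)=$ (dual value) is the asserted identity. I expect the only genuinely delicate point to be the $d$-transform bookkeeping in the second step: checking that each substitution is objective-nondecreasing and feasibility-preserving, and that for the metric cost $d$ the $d$-concave functions are exactly the $1$-Lipschitz ones so that the double transform collapses to $g\mapsto -g$. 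Everything else is standard finite-dimensional LP duality, which applies cleanly here precisely because $V$ is finite and every measure has finite support, so no compactness, tightness, or measurability issues arise.
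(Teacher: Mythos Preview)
Your argument is correct: the finite-LP formulation, strong duality via the feasible product coupling, and the $d$-transform reduction to pairs $(-g,g)$ with $g\in 1\text{-Lip}$ all go through as you describe, and the final sign flip is legitimate since $1\text{-Lip}$ is closed under negation.

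However, there is nothing to compare against: the paper does not prove this proposition at all. It is stated with citations to \cite{Bourne2018ollivier, Villani2021topics} and used as a black box in the proof of Lemma~\ref{Wasserstein lem}. What you have written is essentially the standard proof of Kantorovich--Rubinstein duality in the finite (or, more generally, Polish-space) setting as found in those references, so your proposal supplies the details the paper deliberately omits rather than offering an alternative route.
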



\section{Ricci flow on hypergraphs}\label{result}

In this section, we study a flow governing the evolution of weight $\bm{w}=(w_{h_1},w_{h_2},\cdots,w_{h_m})$ over time. To begin with, for any $h\in H$, we note that since both $W_h$ and $d_h$ depend on $\bm{w}$, they can be treated as functions of $\bm{w}$, denoted by $W_h(\bm{w})$ and $d_h(\bm{w})$, respectively. The Ricci flow on hypergraphs is
\begin{equation}\label{flow eq}
	\left\{
	\begin{aligned}
		&\  w_{h_l}'(t)=-d(h_l)\kappa_\alpha(h_l),\\
		&\  h_l=\{x_{l_1},x_{l_2},\cdots,x_{l_s}\}\in H,\\
		&\  w_{h_l}(0)=w_{0,l},
	\end{aligned}
	\right.
\end{equation}
where $t\geq0$ and $l=1,2,\cdots,m$. The right-hand side of the above Ricci flow depends on $\bm{w}$, specifically given by
$$-d(h_l)\kappa_\alpha(h_l)=W_{h_l}(\bm{w})-d(h_l)(\bm{w})=\sum_{1\leq i<j\leq s}\left(W(\mu_{x_{l_i}}^\alpha, \mu_{x_{l_j}}^\alpha)(\bm{w})-d(x_{l_i},x_{l_j})(\bm{w})\right).$$
To prove the long-time existence of the flow, we first introduce the following lemma.
\begin{lemma}\label{Wasserstein lem}
	Let $\Gamma=(V,H,\bm{w})$ be a weighted hypergraph and $h_l=\{x_{l_1},x_{l_2},\cdots,x_{l_s}\}\in H$. For any two different vertices $x_{l_i}, x_{l_j} \in h_l$, the $1$-Wasserstein distance $W(\mu_{x_{l_i}}^\alpha, \mu_{x_{l_j}}^\alpha)(\bm{w})$ is locally Lipschitz in $\mathbb{R}^m_+$ with respect to $\bm{w}$. In addition, we have that $W_{h_l}(\bm{w})-d(h_l)(\bm{w})$ is also locally Lipschitz in $\mathbb{R}^m_+$ with respect to $\bm{w}$.
\end{lemma}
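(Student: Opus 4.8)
The plan is to prove the Lipschitz continuity of $W(\mu_{x_{l_i}}^\alpha, \mu_{x_{l_j}}^\alpha)(\bm{w})$ by exploiting the Kantorovich--Rubinstein duality (Proposition \ref{Kantorovich duality}), and then combine this with Lemma \ref{distence lem} and Remark \ref{graph Lip re}(ii) to conclude the statement for $W_{h_l}(\bm{w}) - d(h_l)(\bm{w})$. First I would fix a compact subset $K \subset \mathbb{R}^m_+$ and let $\bm{w}, \widetilde{\bm{w}} \in K$, with $W, \widetilde{W}$ and $d, \widetilde{d}$ the corresponding Wasserstein distances and graph distances, and $\mu, \widetilde{\mu}$ the corresponding probability measures. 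By the duality, pick an optimal $1$-Lipschitz function $\phi$ (with respect to $\widetilde{d}$, say, normalized so $\phi(x_{l_i}) = 0$) achieving the supremum for $\widetilde{W}(\widetilde{\mu}_{x_{l_i}}^\alpha, \widetilde{\mu}_{x_{l_j}}^\alpha)$. The standard trick is then to write
\begin{equation*}
	\widetilde{W} - W \leq \sum_{z \in V} \phi(z)\big(\widetilde{\mu}_{x_{l_i}}^\alpha(z) - \widetilde{\mu}_{x_{l_j}}^\alpha(z)\big) - \sum_{z \in V} \psi(z)\big(\mu_{x_{l_i}}^\alpha(z) - \mu_{x_{l_j}}^\alpha(z)\big),
\end{equation*}
where $\psi$ is a suitable rescaling of $\phi$ that is $1$-Lipschitz with respect to $d$, so the second term is a valid (sub-optimal) competitor for $-W$.

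The key estimates are two-fold. First, the $1$-Lipschitz competitor $\phi$ must be controlled: since $\phi$ is $1$-Lipschitz w.r.t. $\widetilde{d}$ and vanishes at $x_{l_i}$, we have $|\phi(z)| \leq \widetilde{d}(x_{l_i}, z) \leq D_K$ on the (finite) support of the measures, where $D_K$ is a uniform diameter bound on $K$; and converting $\phi$ to a $d$-Lipschitz function $\psi$ costs a factor controlled by $\sup_{u \sim v} |d(u,v) - \widetilde{d}(u,v)| / (\text{something bounded below})$, which by Lemma \ref{distence lem} is $O(|\bm{w} - \widetilde{\bm{w}}|)$ --- here one uses that on $K$ the distances $d(u,v)$ are uniformly bounded below by a positive constant (since $\bm w$ has positive entries bounded away from $0$ on $K$). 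Second, the measure difference $|\mu_{x_{l_i}}^\alpha(z) - \widetilde{\mu}_{x_{l_i}}^\alpha(z)|$ is $O(|\bm{w} - \widetilde{\bm{w}}|)$ uniformly on $K$ by Remark \ref{graph Lip re}(ii). Feeding both estimates into the displayed inequality, together with the finiteness of $V$, yields $\widetilde{W} - W \leq C_K |\bm{w} - \widetilde{\bm{w}}|$; swapping the roles of $\bm{w}$ and $\widetilde{\bm{w}}$ gives the reverse bound, establishing local Lipschitz continuity of $W(\mu_{x_{l_i}}^\alpha, \mu_{x_{l_j}}^\alpha)(\bm{w})$.

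For the second assertion, $W_{h_l}(\bm{w}) = \sum_{1 \leq i < j \leq s} W(\mu_{x_{l_i}}^\alpha, \mu_{x_{l_j}}^\alpha)(\bm{w})$ is a finite sum of locally Lipschitz functions, hence locally Lipschitz; and $d(h_l)(\bm{w}) = \sum_{1 \leq i < j \leq s} d(x_{l_i}, x_{l_j})(\bm{w})$ is locally Lipschitz by Lemma \ref{distence lem} (again a finite sum). Therefore their difference $W_{h_l}(\bm{w}) - d(h_l)(\bm{w})$ is locally Lipschitz in $\mathbb{R}^m_+$.

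I expect the main obstacle to be the bookkeeping around converting a $\widetilde{d}$-Lipschitz function into a $d$-Lipschitz one with a uniformly controlled multiplicative loss: this is where one genuinely needs the lower bound on distances over the compact set $K$, and one must be careful that the normalization of $\phi$ (pinning its value at one vertex) is compatible with this rescaling. Everything else --- the duality competitor argument, the finiteness of $V$ turning pointwise bounds into sums, and the symmetry step --- is routine.
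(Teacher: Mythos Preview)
Your proposal is correct and follows essentially the same route as the paper: pick an optimal Kantorovich potential via Proposition~\ref{Kantorovich duality}, rescale it by a factor of the form $(1+\sqrt{m}M|\bm{w}-\widetilde{\bm{w}}|)^{-1}$ to make it $1$-Lipschitz for the other metric (using Lemma~\ref{distence lem}), invoke Remark~\ref{graph Lip re}(ii) for the measures, and sum. Your explicit normalization $\phi(x_{l_i})=0$ to bound $\|\phi\|_{L^\infty}$ by the diameter is in fact a small improvement over the paper, which writes $\|f\|_{L^\infty(V)}$ in the final estimate without justifying its uniform boundedness on the compact set.
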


\begin{proof}
	For any two fixed vertices $x_{l_i}, x_{l_j} \in h_l$ and two vector of weights  $\bm{w}=(w_{h_1},w_{h_2},\cdots,w_{h_m})$, $\widetilde{\bm{w}}=(\widetilde{w}_{h_1},\widetilde{w}_{h_2},\cdots,\widetilde{w}_{h_m})$ in $\mathbb{R}^m_+$, assume that $W(\mu_{x_{l_i}}^\alpha, \mu_{x_{l_j}}^\alpha)$ and $W(\widetilde{\mu}_{x_{l_i}}^\alpha, \widetilde{\mu}_{x_{l_j}}^\alpha)$ are the Wasserstein distances determined by $\bm{w}$ and $\widetilde{\bm{w}}$, respectively, as well as the probability measures $\mu, \widetilde{\mu}$ and the distances $d,\widetilde{d}$. Since the cardinality of the hyperedge set $H$ is finite, there exist some constants $M>0$ and $\delta>0$ such that for each $l=1,2,\cdots,m$, we have
	\begin{equation*}
		M^{-1}\leq w_{h_l}\leq M,\  M^{-1}\leq \widetilde{w}_{h_l}\leq M,\  \vert w_{h_l}-\widetilde{w}_{h_l}\vert\leq\delta^{-1}.
	\end{equation*}
	On the other hand, 	
	since the cardinality of the vertices set $V$ is finite, by Proposition \ref{Kantorovich duality}, there exists some $f\in1\text{-Lip}$ such that
	\begin{equation*}
		W(\mu_{x_{l_i}}^\alpha, \mu_{x_{l_j}}^\alpha)(\bm{w})=\sum_{z\in V}f(z)\left(\mu_{x_{l_i}}^\alpha(z,\bm{w})-\mu_{x_{l_j}}^\alpha(z,\bm{w})\right).
	\end{equation*}
	It follows from Lemma \ref{distence lem} that for any $u,v\in V$, we have
	\begin{equation*}
		\frac{\widetilde{d}(u,v)}{d(u,v)-\widetilde{d}(u,v)}\leq\sqrt{m}M\vert\bm{w}-\widetilde{\bm{w}}\vert.
	\end{equation*}
	Therefore, the function
	\begin{equation*}
		\widetilde{f}(u)=\frac{f(u)}{1+\sqrt{m}M\vert\bm{w}-\widetilde{\bm{w}}\vert}, \quad \forall u\in V,
	\end{equation*}
	satisfies
	\begin{equation*}
		\vert \widetilde{f}(u)-\widetilde{f}(v)\vert=\frac{\vert f(u)-f(v)\vert}{1+\sqrt{m}M\vert\bm{w}-\widetilde{\bm{w}}\vert}\leq\widetilde{d}(u,v).
	\end{equation*}
	i.e. $\widetilde{f}\in1\text{-}\widetilde{\text{Lip}}$.
	By Remark \ref{graph Lip re}(ii), there exists a constant $C>0$, which only depends on the constants $M$ and $\delta$, such that
	\begin{equation*}
		\vert\mu_{x_{l_i}}^\alpha(u,\bm{w})-\widetilde{\mu}_{x_{l_i}}^\alpha(u,\widetilde{\bm{w}})\vert\leq C\vert\bm{w}-\widetilde{\bm{w}}\vert, \quad \forall x_{l_i}\in h_l, u\in V.
	\end{equation*}
	Without loss of generality, assume that $W(\mu_{x_{l_i}}^\alpha, \mu_{x_{l_j}}^\alpha)(\bm{w})\geq W(\widetilde{\mu}_{x_{l_i}}^\alpha, \widetilde{\mu}_{x_{l_j}}^\alpha)(\widetilde{\bm{w}})$, and we have
	\begin{equation*}
		\begin{aligned}
			&W(\mu_{x_{l_i}}^\alpha, \mu_{x_{l_j}}^\alpha)(\bm{w})- W(\widetilde{\mu}_{x_{l_i}}^\alpha, \widetilde{\mu}_{x_{l_j}}^\alpha)(\widetilde{\bm{w}})\\
			\leq\ &\sum_{z\in V}f(z)\left(\mu_{x_{l_i}}^\alpha(z,\bm{w})-\mu_{x_{l_j}}^\alpha(z,\bm{w})\right)-\sum_{z\in V}\widetilde{f}(z)\left(\widetilde{\mu}_{x_{l_i}}^\alpha(z,\widetilde{\bm{w}})-\widetilde{\mu}_{x_{l_j}}^\alpha(z,\widetilde{\bm{w}})\right)\\
			\leq\ &\sum_{z\in V}\vert f(z)\vert\left(\vert\mu_{x_{l_i}}^\alpha(z,\bm{w})-\widetilde{\mu}_{x_{l_i}}^\alpha(z,\widetilde{\bm{w}})\vert+\vert\mu_{x_{l_j}}^\alpha(z,\bm{w})-\widetilde{\mu}_{x_{l_j}}^\alpha(z,\widetilde{\bm{w}})\vert\right)\\
			&+\sum_{z\in V}\vert f(z)-\widetilde{f}(z)\vert\vert\widetilde{\mu}_{x_{l_i}}^\alpha(z,\widetilde{\bm{w}})-\widetilde{\mu}_{x_{l_j}}^\alpha(z,\widetilde{\bm{w}})\vert\\
			\leq\ &\sum_{z\in V}2C\vert\bm{w}-\widetilde{\bm{w}}\vert\vert f(z)\vert+\sum_{z\in V}\frac{\sqrt{m}M\vert\bm{w}-\widetilde{\bm{w}}\vert}{1+\sqrt{m}M\vert\bm{w}-\widetilde{\bm{w}}\vert}\vert f(z)\vert\left(\widetilde{\mu}_{x_{l_i}}^\alpha(z,\widetilde{\bm{w}})+\widetilde{\mu}_{x_{l_j}}^\alpha(z,\widetilde{\bm{w}})\right)\\
			\leq\ &2n(C+\sqrt{m}M)\Vert f\Vert_{L^\infty(V)}\vert\bm{w}-\widetilde{\bm{w}}\vert.
		\end{aligned}
	\end{equation*}
	This completes the proof of the Lemma.
\end{proof}

Next, we present the main theorem of this section, which guarantees the long-time existence of solutions to the flow (\ref{flow eq}).
\begin{theorem}
	Let $\Gamma=(V,H,\bm{w}_0)$ be a weighted hypergraph with an initial weight $\bm{w}_0\in \mathbb{R}^m_+$. Then there exists a unique solution $\bm{w}(t)$ of the flow (\ref{flow eq}) for all time $t\in[0,\infty)$, where $\bm{w_0}=(w_{0,1},w_{0,2},\cdots,w_{0,m})$, and $\bm{w}(t)=(w_{h_1}(t),w_{h_2}(t),\cdots,w_{h_m}(t))$.
\end{theorem}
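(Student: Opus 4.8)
The plan is to read (\ref{flow eq}) as the autonomous ODE system $\bm{w}'(t)=F(\bm{w}(t))$, $\bm{w}(0)=\bm{w}_0$, on the open cone $\mathbb{R}^m_+$, where $F_l(\bm{w})=W_{h_l}(\bm{w})-d(h_l)(\bm{w})$, and then to produce a global-in-time solution from local theory by trapping $\bm{w}(t)$ inside a fixed compact subset of $\mathbb{R}^m_+$ on every finite time interval. By Lemma \ref{Wasserstein lem} each $F_l$ is locally Lipschitz on $\mathbb{R}^m_+$, so the Picard--Lindel\"of theorem gives a unique solution $\bm{w}(t)$ on a maximal interval $[0,T_{\max})$ with $T_{\max}\in(0,\infty]$; it therefore suffices to rule out $T_{\max}<\infty$, and I would argue by contradiction, assuming $T_{\max}<\infty$.

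The crux is two a priori bounds on $[0,T_{\max})$. For a lower bound, nonnegativity of the $1$-Wasserstein distance gives $W_{h_l}\ge 0$, hence $w_{h_l}'(t)\ge-d(h_l)(\bm{w}(t))$; since $x_{l_i},x_{l_j}\in h_l$, the single-hyperedge path $\{h_l\}$ joins them, so $d(x_{l_i},x_{l_j})\le w_{h_l}$ and thus $d(h_l)=\sum_{1\le i<j\le s}d(x_{l_i},x_{l_j})\le\binom{s}{2}w_{h_l}\le\binom{n}{2}w_{h_l}$. Hence $w_{h_l}'(t)\ge-\binom{n}{2}w_{h_l}(t)$, and Gr\"onwall's inequality yields $w_{h_l}(t)\ge w_{0,l}e^{-\binom{n}{2}t}>0$. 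For an upper bound, any two vertices are joined by a hyperpath with no repeated hyperedge, so $d(u,v)\le\sum_{h\in H}w_h=\|\bm{w}\|_1$; testing the Wasserstein infimum against the product coupling gives $W(\mu_{x_{l_i}}^\alpha,\mu_{x_{l_j}}^\alpha)\le\|\bm{w}\|_1$, so $w_{h_l}'(t)\le W_{h_l}(\bm{w}(t))\le\binom{n}{2}\|\bm{w}(t)\|_1$, and summing over $l$ and applying Gr\"onwall again gives $\|\bm{w}(t)\|_1\le\|\bm{w}_0\|_1 e^{Ct}$ with $C=\tfrac{1}{2}m n(n-1)$.

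These estimates confine $\bm{w}(t)$, for all $t\in[0,T_{\max})$, to $K=\{\bm{w}\in\mathbb{R}^m: w_{0,l}e^{-\binom{n}{2}T_{\max}}\le w_{h_l}\le\|\bm{w}_0\|_1 e^{CT_{\max}},\ l=1,\dots,m\}$, which is compact and, since its lower bounds are strictly positive, contained in $\mathbb{R}^m_+$. As $F$ is Lipschitz on $K$, the solution continues past $T_{\max}$, contradicting maximality; hence $T_{\max}=\infty$, and uniqueness on $[0,\infty)$ comes from the uniqueness clause of Picard--Lindel\"of. The one step I expect to need genuine care is selecting the right a priori inequalities, in particular $d(h_l)\le\binom{|h_l|}{2}w_{h_l}$ from the one-hyperedge path, which is exactly what prevents the weights from collapsing onto the boundary of $\mathbb{R}^m_+$ in finite time; the remaining work is the routine compact-exhaustion continuation argument for ODEs with locally Lipschitz right-hand side.
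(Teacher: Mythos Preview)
Your proposal is correct and follows essentially the same route as the paper: invoke Lemma~\ref{Wasserstein lem} for local Lipschitz continuity, apply Picard--Lindel\"of to get a maximal solution, and rule out finite-time blow-up via the two a priori bounds $w_{h_l}'\ge -\binom{|h_l|}{2}w_{h_l}$ (from $d(x_{l_i},x_{l_j})\le w_{h_l}$) and $W(\mu_{x_{l_i}}^\alpha,\mu_{x_{l_j}}^\alpha)\le\|\bm{w}\|_1$ (from a coupling and the trivial path bound). The only differences are cosmetic: the paper keeps the hyperedge-dependent constant $\binom{s}{2}$ in the lower bound where you use the uniform $\binom{n}{2}$, and its upper-bound constant differs slightly from your $C=m\binom{n}{2}$, but the structure of the argument is identical.
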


\begin{proof}
	We first note that for any $\bm{w}_0\in\mathbb{R}^m_+$, the evolution of weights (\ref{flow eq}) is a system of ordinary differential equations as follows
	\begin{equation}\label{system eq}
		\left\{
		\begin{aligned}
			&\  \bm{w}'(t)=\bm{W}_H(t)-\bm{d}_H(t),\\
			&\  \bm{w}(0)=\bm{w}_0,
		\end{aligned}
		\right.
	\end{equation}
	where for simplicity, we denote $\left(W_{h_1}(t)-d(h_1)(t),W_{h_2}(t)-d(h_2)(t),\cdots,W_{h_m}(t)-d(h_m)(t)\right)$ by $\bm{W}_H(t)-\bm{d}_H(t)$. By Lemma \ref{Wasserstein lem}, it is clear that $\bm{W}_H-\bm{d}_H$ is locally Lipschitz with respect to $\bm{w}$. Through the classical theorem on the existence and uniqueness of solutions to ordinary differential systems, there exists a constant $T>0$ such that the system (\ref{system eq}) has a unique solution $\bm{w}(t)$ on $[0,T]$.
	
	Let
	\begin{equation*}
		T^*:=\sup\left\{T>0: (\ref{system eq}) \text{ has a unique solution on } [0,T]\right\}
	\end{equation*}
	and define
	\begin{equation*}
		\psi(t):=\min\{w_{h_1}(t),w_{h_2}(t),\cdots,w_{h_m}(t)\}, \quad \Psi(t):=\max\{w_{h_1}(t),w_{h_2}(t),\cdots,w_{h_m}(t)\}.
	\end{equation*}
	Next, we prove that $T^*$ must be infinite. Suppose not, then according to the ODE theory, we have
	\begin{equation}\label{contradiction eq}
		\liminf\limits_{t\to T^*}\psi(t)=0, \quad\text{or}\quad \limsup\limits_{t\to T^*}\psi(t)=\infty.
	\end{equation}
	For any hyperedge $h_l=(x_{l_1},x_{l_2},\cdots,x_{l_s})\in H$, by the definition (\ref{distance hyperedge}), the system (\ref{system eq}) can be rewritten as follows.
	\begin{equation*}
		\begin{aligned}
			w_{h_l}'(t)&=\sum_{1\leq i<j\leq s}\left(W(\mu_{x_{l_i}}^\alpha, \mu_{x_{l_j}}^\alpha)-d(x_{l_i},x_{l_j})\right)\\
			&\geq-\sum_{1\leq i<j\leq s}d(x_{l_i},x_{l_j})\\
			&\geq-\frac{s(s-1)}{2}w_{h_l}(t).
		\end{aligned}
	\end{equation*}
	Thus, we have
	\begin{equation}\label{weight leq}
		w_{h_l}(t)\geq w_{h_l}(0)e^{-\frac{s(s-1)}{2}T^*}, \quad \forall t\in[0,T^*).
	\end{equation}
	
	On the other hand, by definitions (\ref{distance def}) and (\ref{Wasserstein def}), for the two probability measures $\mu_{x_{l_i}}^\alpha$ and $\mu_{x_{l_j}}^\alpha$, we have
	\begin{equation}\label{Ww neq}
		\begin{aligned}
			W(\mu_{x_{l_i}}^\alpha, \mu_{x_{l_j}}^\alpha)&=\inf_{A}\sum_{u,v\in V}A(u,v)d(u,v)\\
			&\leq\left(\sum_{u,v\in V}B(u,v)\right)\left(\sum_{h\in\gamma}w_h\right)\\
			&=\sum_{h\in H}w_h,
		\end{aligned}
	\end{equation}
	where $B(u,v)$ is a coupling between $\mu_{x_i}^\alpha$ and $\mu_{x_j}^\alpha$ satisfying $\sum_{u,v\in V}B(u,v)=1$. It follows from (\ref{Ww neq}) that
	\begin{equation*}
		\begin{aligned}
			\frac{d}{dt}\sum_{h_l\in H}w_{h_l} &=\sum_{h_l\in H}\sum_{1\leq i<j\leq s}\left(W(\mu_{x_{l_i}}^\alpha, \mu_{x_{l_j}}^\alpha)-d(x_{l_i},x_{l_j})\right)\\
			&\leq\sum_{h_l\in H}\sum_{1\leq i<j\leq s}W(\mu_{x_{l_i}}^\alpha, \mu_{x_{l_j}}^\alpha)\\
			&\leq nm \sum_{h_l\in H}w_{h_l}.
		\end{aligned}
	\end{equation*}
	Hence, we have
	\begin{equation}\label{weight req}
		\sum_{h_l\in H}w_{h_l}(t)\leq e^{nmT^*}\sum_{h_l\in H}w_{h_l}(0), \quad \forall t\in[0,T^*).
	\end{equation}
	Combining (\ref{weight leq}) and (\ref{weight req}), we have
	\begin{equation}\label{contradiction weight}
		\psi(0)e^{-\frac{s(s-1)}{2}T^*}\leq\psi(t)\leq\Psi(t)\leq\sum_{h_l\in H}w_{h_l}(t)\leq e^{nmT^*}\sum_{h_l\in H}w_{h_l}(0), \quad \forall t\in[0,T^*),
	\end{equation}
	which contradicts (\ref{contradiction eq}). Throughout this paper, we use $s$ to denote a generic constant representing the cardinality of a hyperedge (i.e., the number of vertices it contains). Here in (\ref{contradiction weight}), it specifically refers to the vertex count of the minimally weighted hyperedge. Now the contradiction implies that $T^*$ can be extended to infinity, and consequently, we prove that there exists a unique solution $\bm{w}(t)$ of the flow (\ref{flow eq}) for all time $t\in[0,\infty)$.
\end{proof}


\section{Algorithm}\label{algorithm}

The discrete version of the flow (\ref{flow eq}) can be written as
\begin{equation}\label{discrete flow eq}
	\left\{
	\begin{aligned}
		&\  w_{h_l}(t_{k+1})- w_{h_l}(t_k)=\eta\left(W_{h_l}-d(h_l)\right)(t_k)=\eta\sum_{1\leq i<j\leq s}\left(W(\mu_{x_{l_i}}^\alpha, \mu_{x_{l_j}}^\alpha)-d(x_{l_i},x_{l_j})\right)(t_k),\\
		&\  h_l=\{x_{l_1},x_{l_2},\cdots,x_{l_k}\}\in H,\\
		&\  w_{h_l}(0)=w_{0,l},
	\end{aligned}
	\right. 
\end{equation}
where $\eta =(t_{k+1}-t_k)> 0$ is the step size of discretization. Smaller step sizes can usually improved progress of the flow through more iterations compared to using larger step sizes. To balance computational accuracy and efficiency, we set the step size $\eta= 0.1$. During this iteration process, edges with negative curvature are stretched, which increases the computational cost between different communities. By removing edges with large weights, we can identify communities effectively. This method reduces the weights of hyperedges between communities while preserving intra-community cohesiveness. The complete community detection algorithm (Ricci flow Community Detection of Hypergraphs, HyperRCD) is outlined by Algorithm 1 as follows.
\begin{algorithm}[H]
	\caption{HyperRCD}
	\label{one_algorithm}
	\KwIn{A weighted hypergraph $\Gamma=(V,H,\bm{w})$  ; Maximum iteration $K$ ; Step size $\eta$}
	\KwOut{Community detection result of $\Gamma$}
	
	\For{ $k = 1,\cdots ,K $}
	{\, Update hyperedge weights through (\ref{discrete flow eq})\\
		Let $\Gamma'=(V',H',\bm{w'})$ be the hypergraph after iteration\\
		\For{cutoff= $w'_{max},\cdots, w'_{min}$}
		{
			\For{ $ h'_l \in  H’ $}{
				\If{$w'_{h'_l}>{\it cutoff}$}{
					Remove the hyperedge $h'_l$
				}
			}
			Calculate the accuracy of community detection
		}
	} 
	Output the best community detection result of $\Gamma$
	
\end{algorithm}
The computational complexity of the proposed algorithm HyperRCD is primarily dominated by two factors: (i) the shortest path computation on the hypergraph, and (ii) solution of the linear programming problem. Let $E = \sum_{h \in H} \binom{s}{2}$, where $s$ denotes the cardinality of the hyperedge $h$. Let $D$ and $|V|$ represent the average degree and the number of vertices of the hypergraph respectively. The time complexities for the two tasks are $O(E\log|V|)$ and $O(E D^3)$ respectively. Here, despite the sparse connectivity of the hypergraph, where $D \ll E$, $O(E D^3)$ often surpasses $O(E \log|V|)$ in most scenarios. Consequently, the computational complexity of our approach is \(O(E D^3)\).


\section{Experiments}
\subsection{Datasets and comparison algorithms}
To assess the effectiveness of HyperRCD, we conducted extensive experiments on both synthetic and real-world networks (hypergraphs). The virtual datasets utilized in this study are generated using the Degree-Corrected Stochastic Block Model (DCSBM) \cite{Peixoto2014efficient}. This model is a robust statistical framework that allows for the creation of synthetic graphs with adjustable degree distributions and community structures. By specifying parameters related to the expected degrees of nodes and the probabilities of connections within and between communities, the DCSBM enables the generation of datasets that closely resemble the characteristics of real-world networks. 

For synthetic networks, we generated three series of networks of varying scales using the parameters listed in Table \ref{tab:parameters} to adjust the characteristics of the network. Denote the total hyperedge cardinality of the hypergraph $\Gamma$ by $\sum|H|$. For the D1 series, we maintain a network size of 100 nodes with 3 communities while systematically varying the average node degree from 3 to 30 (resulting in total hyperedge cardinality $\sum |H|$ ranging from 300 to 3000). This configuration evaluates algorithmic robustness under varying $\sum|H|$ and connectivity patterns. The D2 series fixes the network at 100 nodes with average node degree$=3$ and $\sum |H|=300$, but explores intra-community connectivity strength from 0.15 to 0.85. This manipulation creates increasingly “ambiguous” community boundaries when the connectivity strength is low, making it more difficult to separate communities. For the D3 series, we vary the the network size from 100 to 1000 nodes, while fixing the average node degree at 10, the number of communities at 10, and intra-community connectivity strength at 0.85. This configuration tests how well each algorithm copes with increasing network size. For complete parameter specifications, please refer to the provided accompanying source code, which is
available at https://github.com/mjc191812/Community-detection-of-hypergraphs-by-Ricci-flow. All these synthetic networks are used as input for the proposed algorithm to conduct the experiments and the detected results are compared with those of several popular algorithms, specifically Louvain\cite{blondel2008fast}, Girvan-Newman\cite{girvan2002community}, Label Propagation Algorithm (LPA)\cite{cordasco2010community}, Infomap\cite{rosvall2008maps}, and Walktrap\cite{pons2005computing}.

\begin{table}[htbp]
	\centering
	\caption{Parameter settings for synthetic data}
	\label{tab:parameters}
	\begin{tabular}{cccc}
		\toprule
		Parameter       & D1       & D2        & D3       \\
		\midrule
		\# Nodes    & 100      & 100       & 100-1000 \\
		\#Communities     & 3       & 3        & 10    \\
		Avg. node degree       & 3-30     & 3         & 10        \\
		$\sum|H|$ & 300-3000 & 300       & 1000-10000 \\
		Intra-community connectivity strength          & 0.85     & 0.15-0.85 & 0.85     \\
		\bottomrule
	\end{tabular}
\end{table}

A summary of dataset statistics for the real-world hypergraphs analyzed in this study, which originate from diverse applications, is provided in Table \ref{tab:2}. Following the experimental protocols established by Lee and Shin \cite{lee2023m} and Hacquard \cite{Hacquard2024hypergraph}, we evaluate our method on seven benchmark datasets, including Zoo and Mushroom \cite{Markelle2017uci}, Cora-C, Citeseer and Pubmed \cite{sen2008collective}, Cora-A \cite{rossi2015network}, NTU2012 \cite{chen2003visual}. We conduct a comprehensive evaluation of our method against advanced graph and hypergraph partitioning approaches. For graph-based counterparts, we include Node2vec \cite{grover2016node2vec} employing clique expansion with skip-gram modeling, DGI \cite{velivckovic2018deep} utilizing graph-level contrastive learning, and GRACE \cite{zhu2020deep} leveraging graph augmentation with contrastive representation learning. In hypergraph comparisons, we benchmark against S$^2$-HHGR \cite{zhang2021double} with spectral-spatial hybrid embeddings, TriCL \cite{lee2023m} implementing tri-directional contrastive learning for multi-scale structural consistency, and hypergraph modularity maximization algorithm \cite{kaminski2020community} optimizing community detection via modularity metrics. Our analysis extends the N-Ricci method \cite{Hacquard2024hypergraph}, which is a Ricci curvature-driven approach that applies node-based Ricci flow on clique-expanded hypergraphs. Crucially, since the clique expansion of hypergraphs is non-invertible, HyperRCD operates directly on the native hypergraph structure, better preserving the higher-order relationships inherent in hypergraph data compared to graph reduction approaches. The above comprehensive comparison spans diverse technical paradigms, from neural embeddings to curvature-driven dynamics, ensuring a rigorous validation of our method’s robustness and scalability, particularly for hypergraphs with large edges or heterogeneous community structures.
\begin{table}[h]
	\caption{\label{tab:2}Summary of real-world network characteristics}
	\centering
	\begin{tabular}{lccccccc}
		\toprule
		Dataset & Zoo & Mushroom  & Cora-C & Citeseer & Pubmed  & Cora-A  & NTU2012 \\
		\midrule
		\# Nodes & 101 & 8124 & 1434 & 1458 & 3840 & 2388 & 2012 \\
		\# Hyperedges & 43 & 298 & 1579 & 1079 & 7963 & 1072 & 2012 \\
		Avg. hyperedge size & 39.9 & 136.3 & 3.0 & 3.2 & 4.4 & 4.3 & 5.0 \\
		Avg. node degree & 17.0 & 5.0 & 3.3 & 2.4 & 9.0 & 1.9 & 5.0 \\
		\# Communities & 7 & 2 & 7 & 6 & 3 & 7 & 67 \\
		\bottomrule
	\end{tabular}
\end{table}
\subsection{Experimental Setup}

\textbf{Evaluation Metrics.}  
Various metrics exist to evaluate community detection algorithms. In this work, we employ a widely used metric, say the Normalized Mutual Information (NMI) \cite{ana2003robust} to evaluate the quality of the detected communities. NMI measures the agreement between two network partitions (e.g., predicted and ground truth), which is defined as:  
\[
\text{NMI} = \frac{-2 \sum_{a=1}^A \sum_{b=1}^B \mu_{ab} \log \left( \frac{\mu_{ab} \cdot \mathcal{N}}{\alpha_a \cdot \beta_b} \right)}{\sum_{a=1}^A \alpha_a \log \left( \frac{\alpha_a}{\mathcal{N}} \right) + \sum_{b=1}^B \beta_b \log \left( \frac{\beta_b}{\mathcal{N}} \right)},
\]  
where \( \mu_{ab} \) denotes the number of vertices co-assigned to group \( a \) in partition \( \mathcal{X} \) and group \( b \) in partition \( \mathcal{Y} \), \( \alpha_a \) and \( \beta_b \) represent group sizes of partitions \( \mathcal{X} \) and \( \mathcal{Y} \), and \( \mathcal{N} \) is the total number of vertices. This formulation normalizes mutual information by the combined entropy of both partitions, yielding values in \([0, 1]\). A score of 1 indicates perfect alignment, while 0 reflects independence.

NMI offers distinct advantages in community detection tasks. Its scale invariance and resistance to label permutations enable cross-algorithm comparison. Unlike modularity \cite{newman2004finding}, which depends on network density, NMI accounts for both structural correspondence and probabilistic node assignments. This robustness to hierarchical nesting and unequal group sizes makes it ideal for evaluating real-world networks with complex community structures \cite{lancichinetti2009community}. 

\textbf{Implementation Details.} It is important to note that the Ollivier–Ricci curvature of hypergraphs depends on the selection of the parameter \( \alpha \). Extensive experiments conducted by \cite{Ma2024evolution, Ni2019community} have demonstrated that setting \( \alpha = 0.5 \) yields the best results for graph community detection. We will use \( \alpha = 0.5 \) as the parameter setting for our experiments to maintain consistency. All experiments are conducted five times with different random seeds, and the average performance scores are reported. For traditional community discovery algorithms applied to graphs, we first transform the hypergraphs into graphs using clique expansion. In the case of TriCL and S$^2$-HHGR, their original designs are preserved, leveraging additional node features. In contrast, our proposed approach rely exclusively on the structural information of the hypergraph instead of its clique expansion. All baseline scores are sourced from Lee and Shin \cite{lee2023m} and Hacquard \cite{Hacquard2024hypergraph}. We conduct all experiments on a computing platform with an NVIDIA GPU (24GB memory) and a 16-core Intel CPU ( i9-12900KF), defining run failures as either out-of-memory (OOM) errors or exceeding a 48-hour runtime limit on this device.

\subsection{Experimental Results}
\subsubsection{Experimental Results for the Synthetic Networks.}
To rigorously evaluate the community detection performance, we leverage three series of synthetic networks with embedded community structures. The Normalized Mutual Information (NMI) is adopted as the evaluation metric, quantifying the similarity between detected communities and the ground-truth. Each network within the series is sequentially input into the HyperRCD and comparison methods, with the NMI results visualized in Figure \ref{result_D1}–\ref{result_D3}.

As illustrated in Figure \ref{result_D1}, when analyzing the relationship between community detection performance and average node degree, classical algorithms (e.g., Louvain, Girvan-Newman) exhibit divergent trends: some show marginal improvements, while others degrade or fluctuate notably with increasing average degree. In stark contrast, HyperRCD demonstrates exceptional stability, maintaining a consistently higher NMI across all tested average degree values. This stems from HyperRCD’s innovative hypergraph representation mechanism, which explicitly models higher-order node interactions through hyperedges. By capturing multi-wise relationships rather than pairwise connections, HyperRCD inherently adapts to diverse hyperedge configurations—this structural modeling innovation enables the algorithm to mitigate sensitivity to node connectivity density changes, thus achieving robust generalization in community identification where traditional methods falter.
\begin{figure}[htbp]
	\centering
	\includegraphics[width=0.65\linewidth]{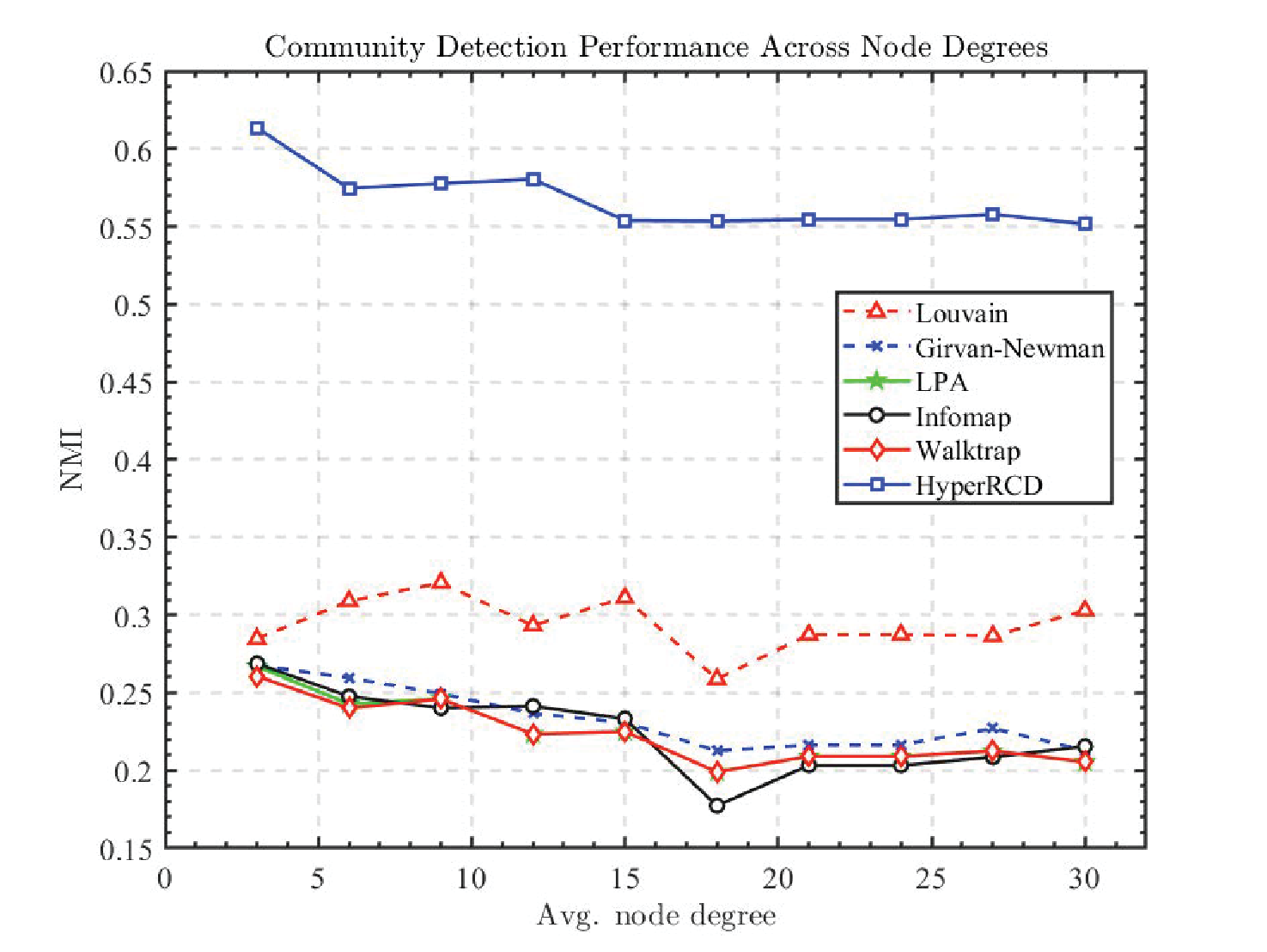}
	\caption{NMI performance on synthetic network D1}
	\label{result_D1}
\end{figure}
Figure \ref{result_D2} illustrates community detection performance across varying intra-community connectivity strengths. Although all methods benefit from increased connectivity, HyperRCD consistently surpasses the baseline approaches over the entire range. Notably, under lower connectivity conditions (0.15–0.3), certain traditional algorithms yield NMI values approaching 0. This occurs because methods like LPA tend to converge to trivial solutions—grouping all nodes into a single giant community due to their label update strategy. In contrast, HyperRCD maintains relatively high NMI scores, underscoring its robustness in resolving ambiguous community boundaries. This resilience is attributed to its advanced hypergraph modeling, which adeptly captures subtle structural dependencies even in sparse or weakly defined community contexts.
\begin{figure}[htbp]
	\centering
	\includegraphics[width=0.65\linewidth]{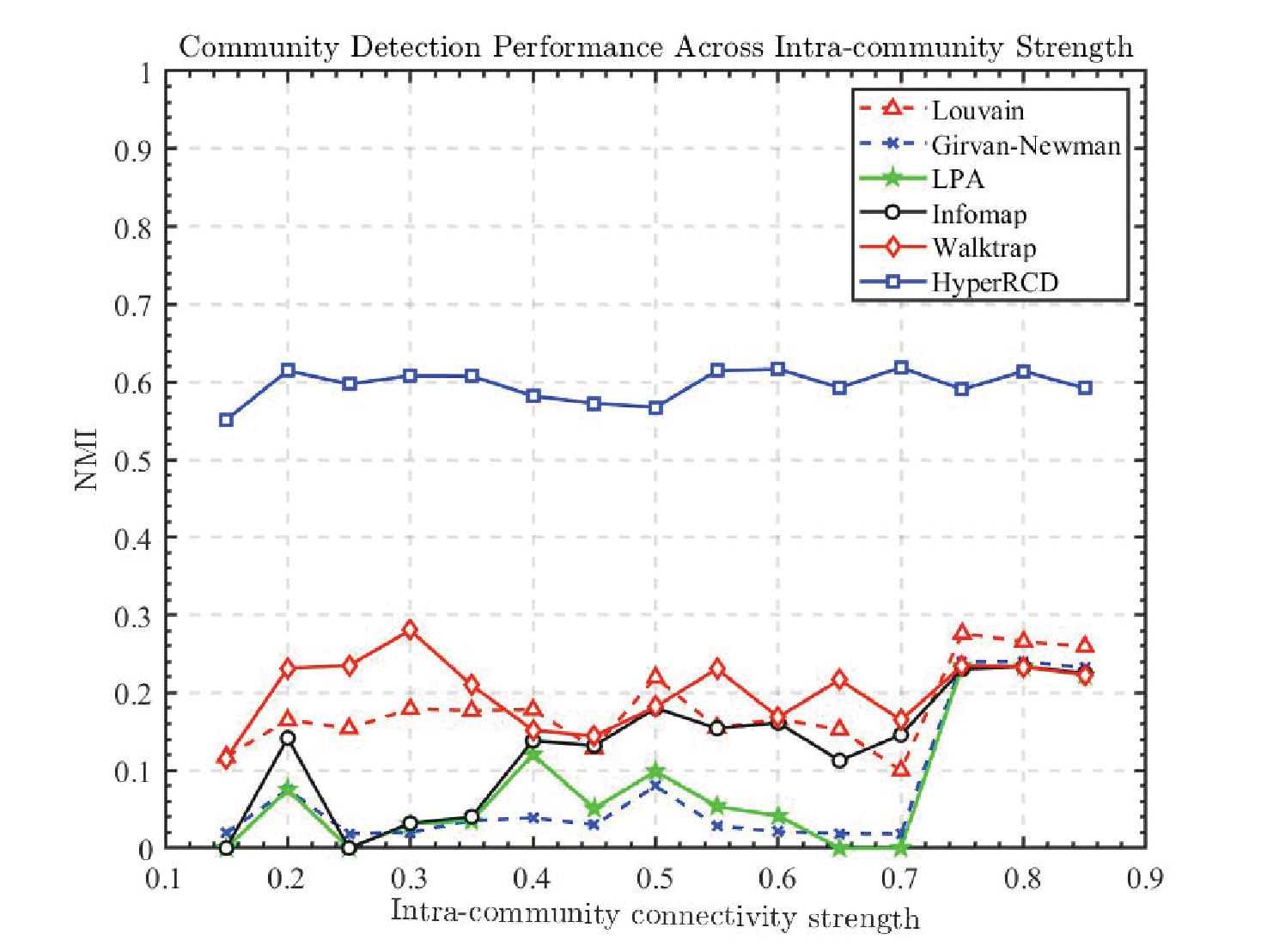}
	\caption{NMI performance on synthetic network D2}
	\label{result_D2}
\end{figure}
Figure \ref{result_D3} further validates the scalability of HyperRCD. As the network size grows (i.e., the number of nodes increases), most baseline algorithms suffer significant performance degradation, with some NMI values dropping below 0.2. Conversely, HyperRCD maintains a stable NMI above 0.6, showcasing its capability to handle large-scale networks without compromising detection accuracy. This scalability stems from HyperRCD’s hypergraph-based formulation and optimized computational strategy, which efficiently manage the complexity escalation associated with network expansion.
\begin{figure}[H]
	\centering
	\includegraphics[width=0.65\linewidth]{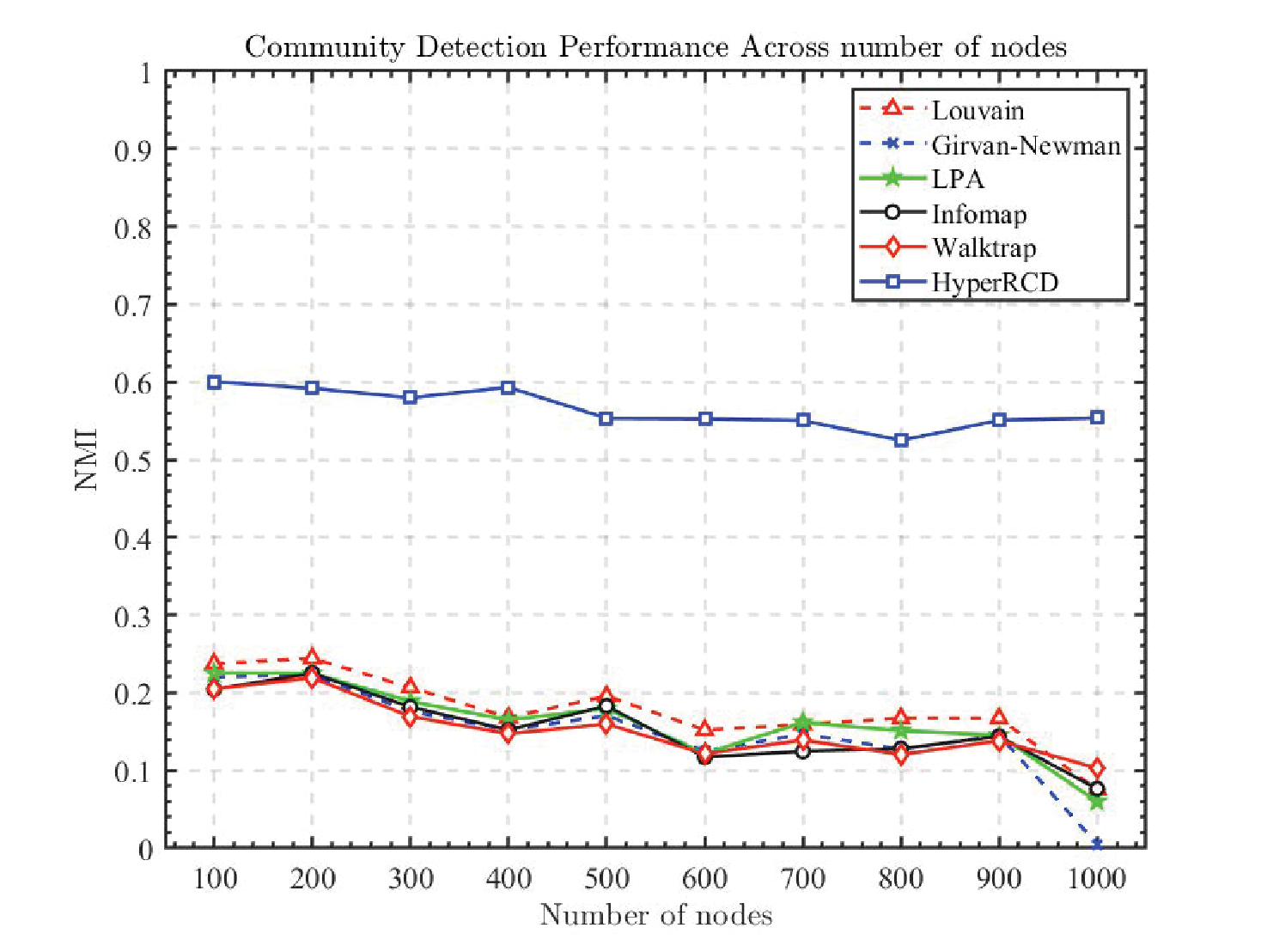}
	\caption{NMI performance on synthetic network D3}
	\label{result_D3}
\end{figure}
Overall, these experiments demonstrate that HyperRCD outperforms widely used community detection algorithms across all three dimensions: it scales more gracefully with network size, remains robust in the presence of ambiguous community boundaries, and generalizes effectively to various hyperedge distributions.

We visualize HyperRCD on two synthetic datasets with different scales (Figures \ref{fig:visualization_syn1} and \ref{fig:visualization_syn2}), comparing against both the original structures and Louvain-detected results. Specifically, Figure \ref{fig:visualization_syn1} displays the visualization of a D1-series hypergraph with 100 nodes, 3 communities, avg. degree$=15$, and intra-community connectivity strength$=0.85$; Figure \ref{fig:visualization_syn2} displays the visualization of a D3-series hypergraph with 500 nodes, 10 communities, avg. degree$=10$, and intra-community connectivity strength$=0.85$. For clearer community visualization, we display clique-expanded graphs of the hypergraphs, where nodes sharing labels share colors, intra-community edges match node colors and inter-community edges use gradient colors. It is clear that HyperRCD outperforms Louvain in preserving intra-community connectivity and achieving label-consistent partitions for the synthetic datasets with different scales.
\begin{figure}[htbp] 
	\centering
	\subfigure[Original clique-expanded graph ]{ 
		\includegraphics[width=0.3\textwidth]{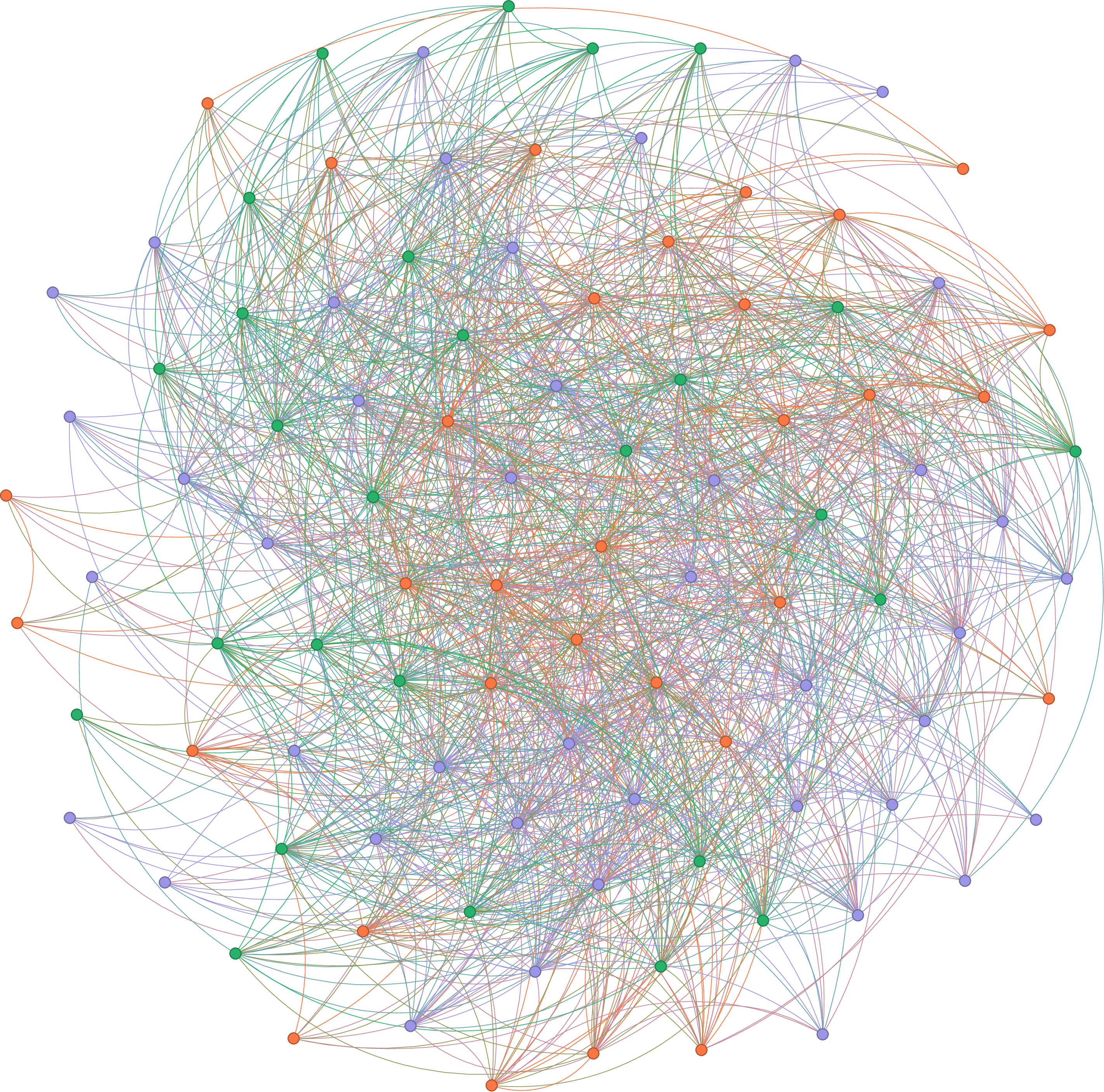}
	}
	\hfill 
	\subfigure[After HyperRCD processing]{ 
		\includegraphics[width=0.3\textwidth]{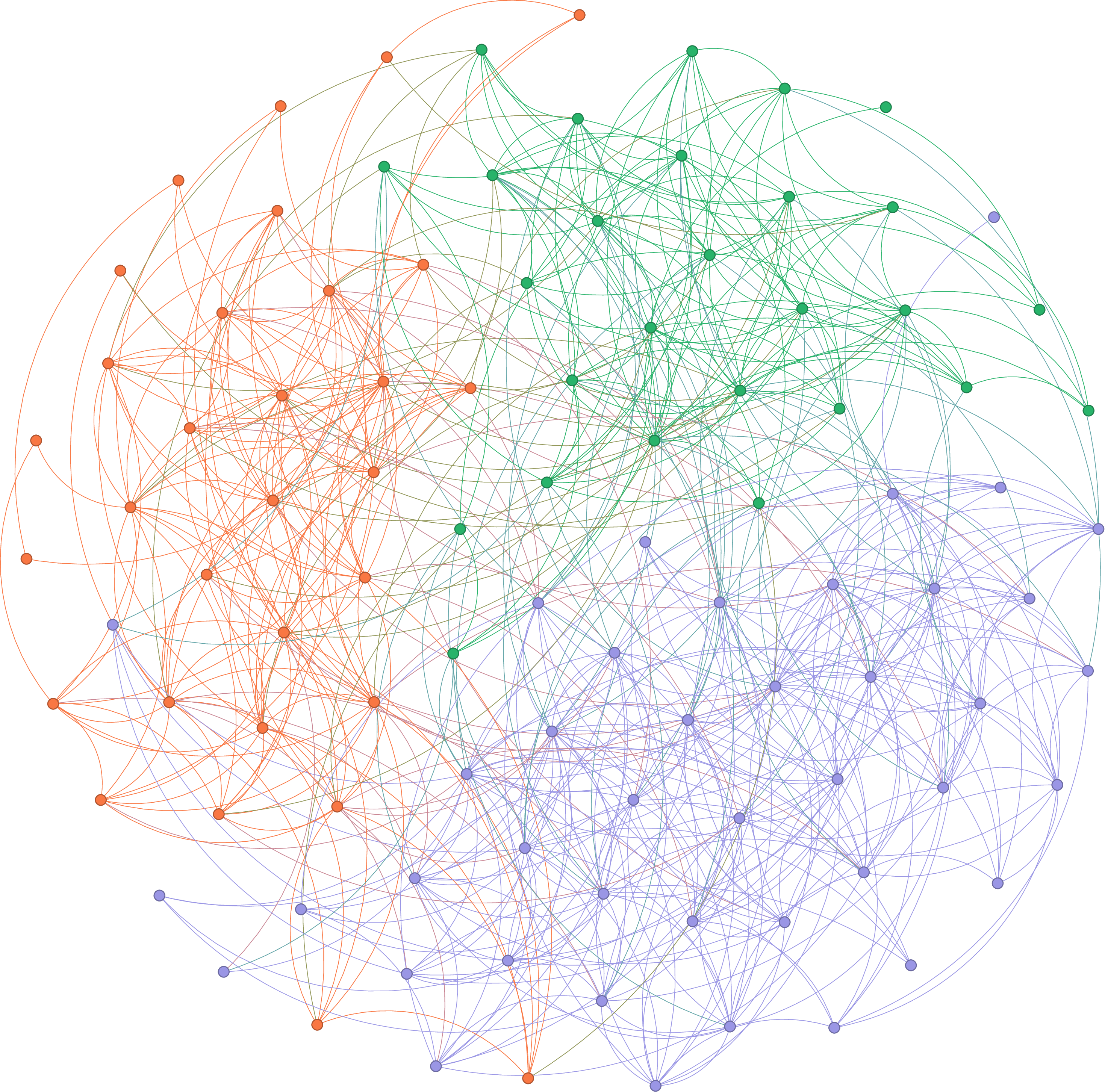}
	}
	\hfill
	\subfigure[After Louvain processing]{ 
		\includegraphics[width=0.3\textwidth]{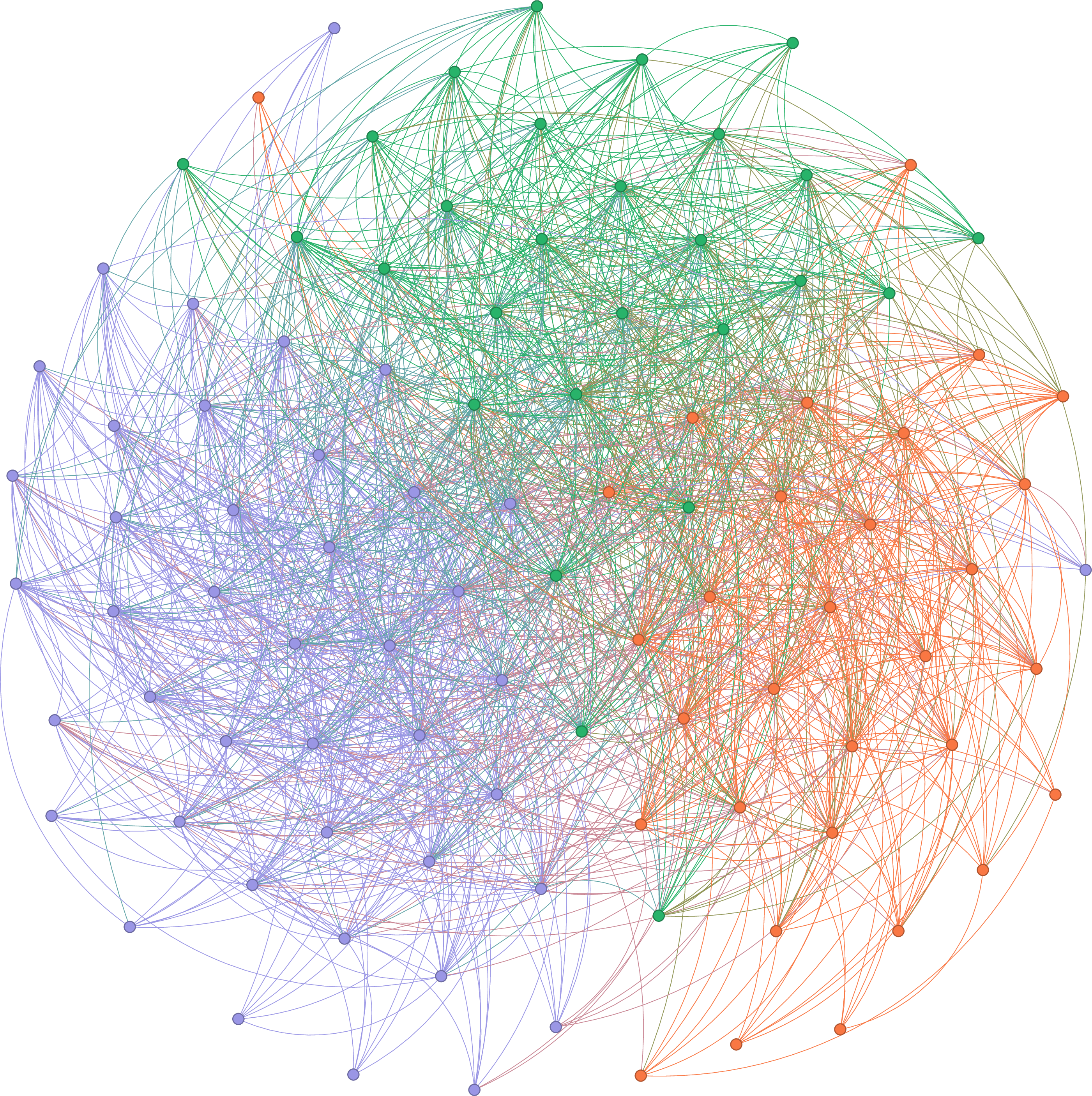}
	}
	\caption{HyperRCD Community Detection on D1: Original vs. HyperRCD vs. Louvain} 
	\label{fig:visualization_syn1} 
\end{figure}

\begin{figure}[htbp] 
	\centering
	\subfigure[Original clique-expanded graph]{ 
		\includegraphics[width=0.3\textwidth]{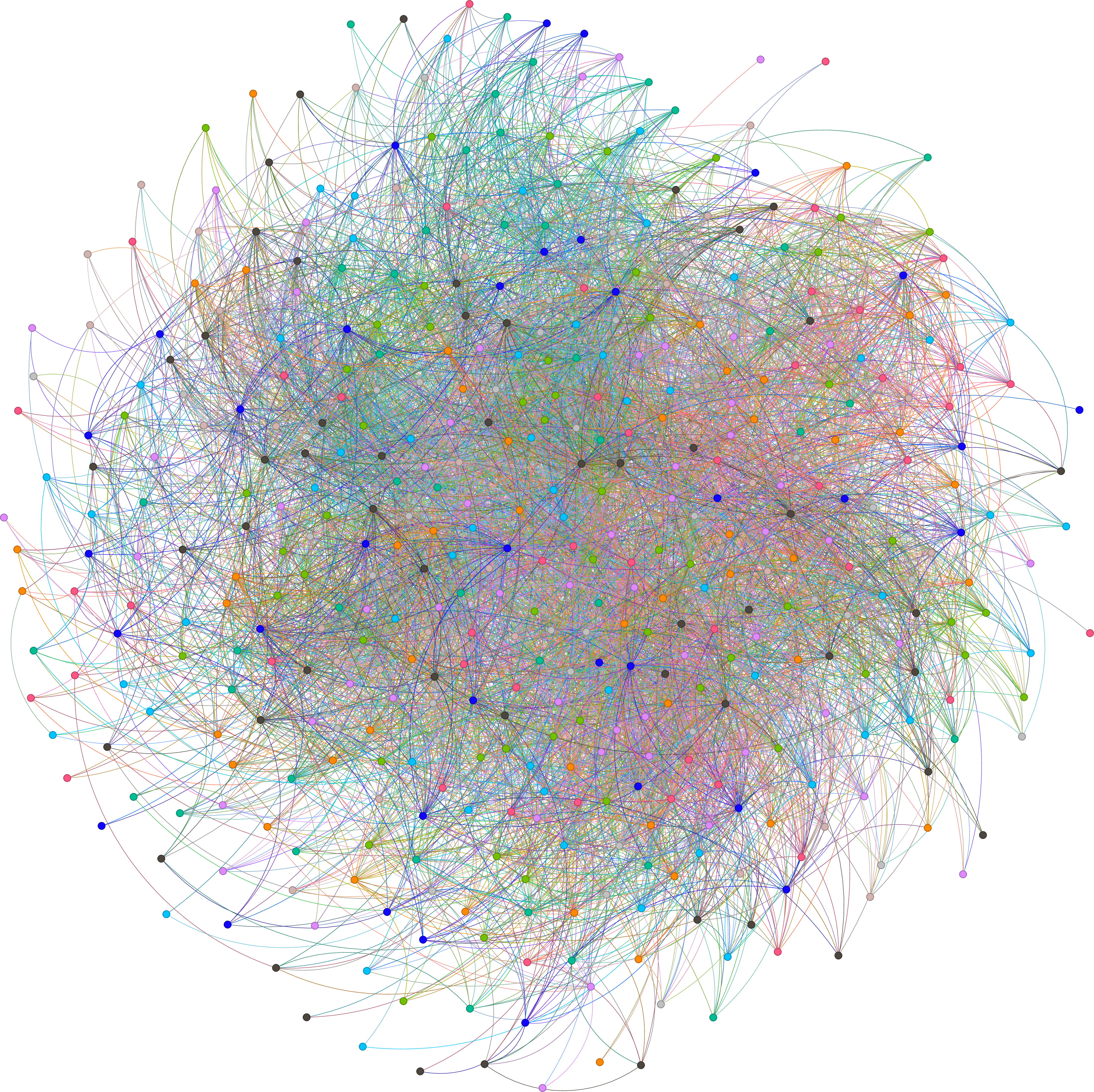}
	}
	\hfill 
	\subfigure[After HyperRCD processing]{ 
		\includegraphics[width=0.28\textwidth]{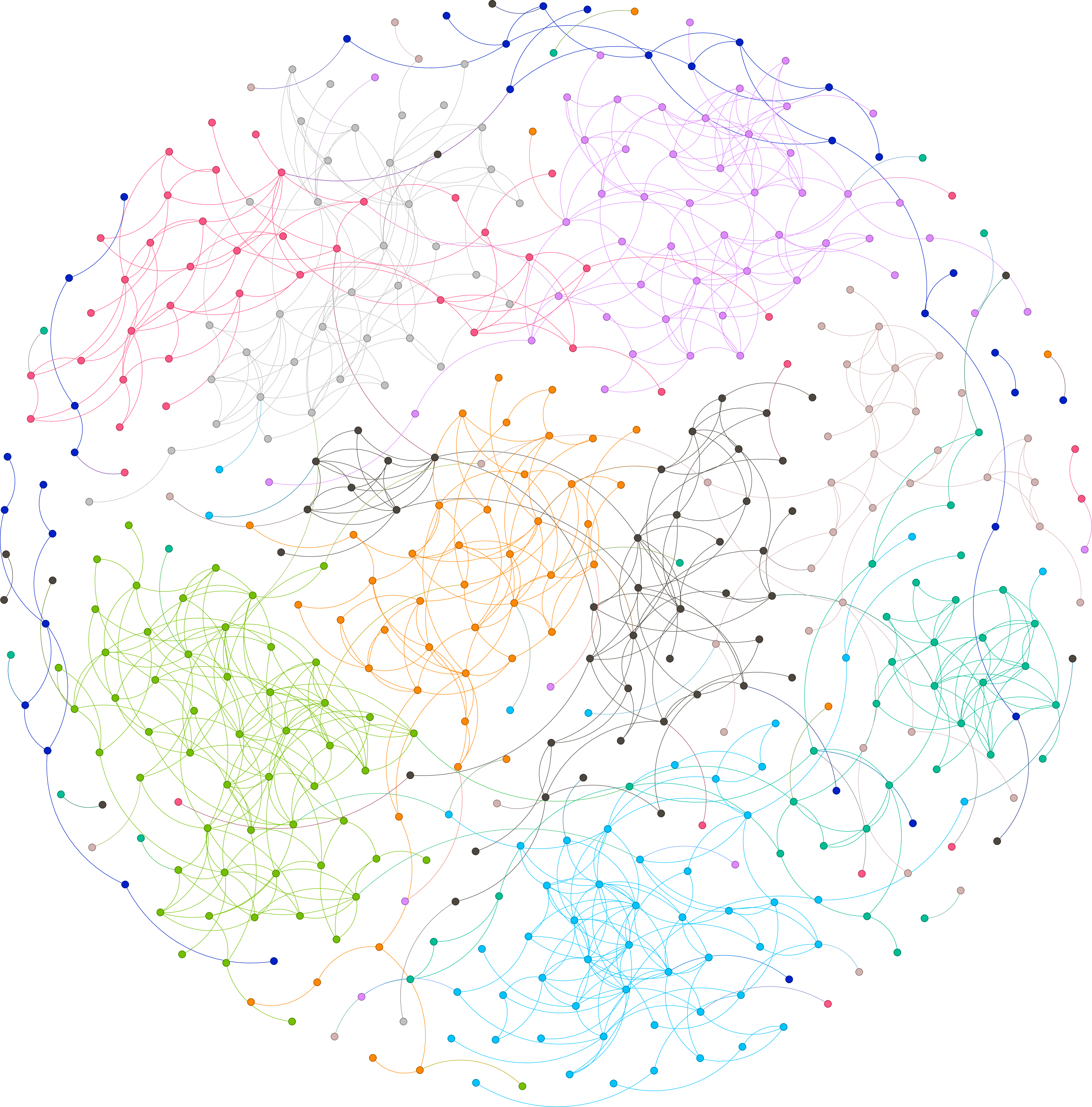}
	}
	\hfill
	\subfigure[After Louvain processing]{ 
		\includegraphics[width=0.28\textwidth]{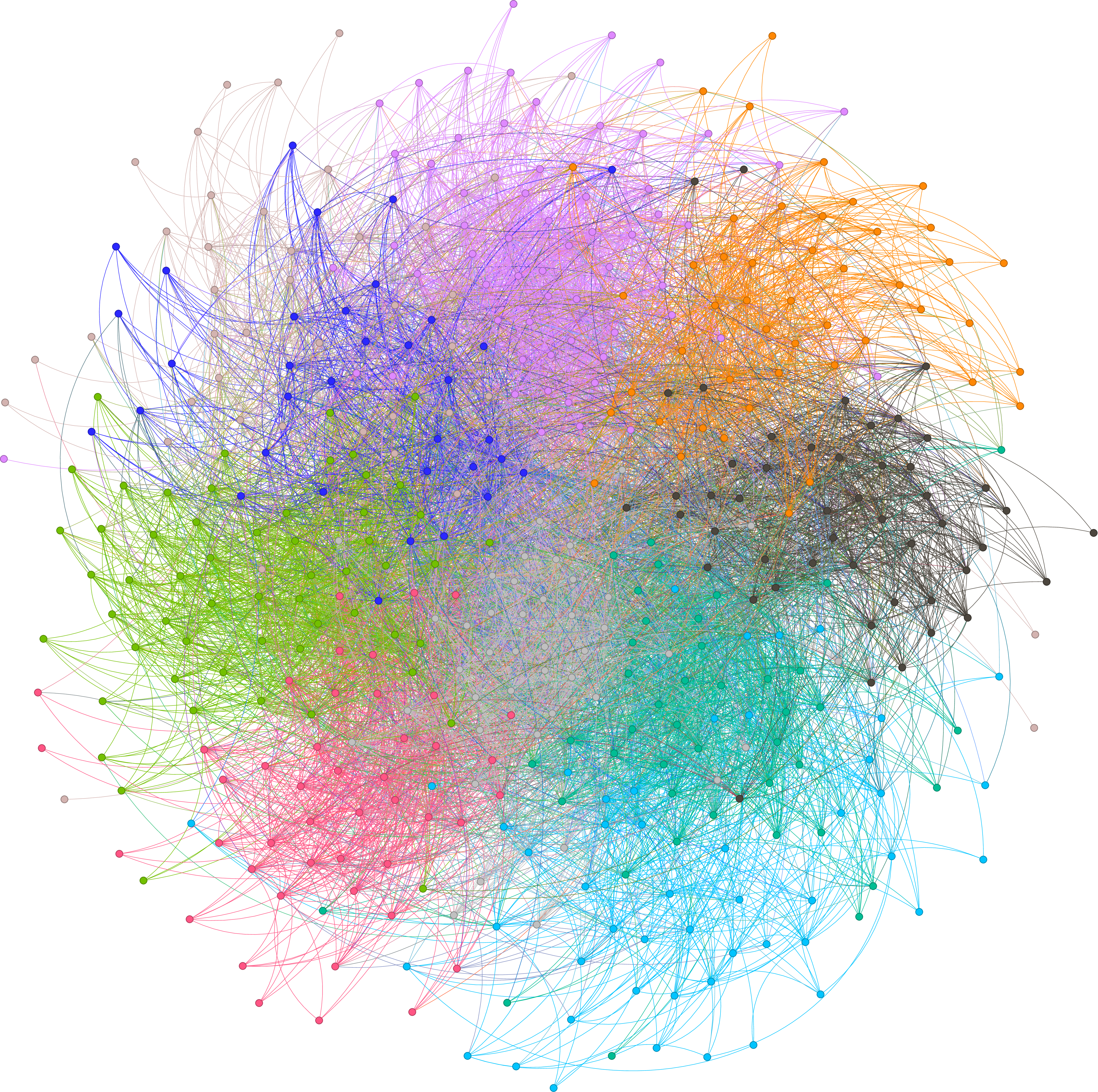}
	}
	\caption{HyperRCD Community Detection on D3: Original vs. HyperRCD vs. Louvain} 
	\label{fig:visualization_syn2} 
\end{figure}

\subsubsection{Experimental Results for the Real-World Networks}

\begin{table}[h]
	\caption{\label{tab:3}Performance of various algorithms on real-world networks}
	\centering
	\begin{tabular}{cccccccc}
		\toprule
		Methods\textbackslash{}Networks & Zoo & Mushroom & Cora-C & Citeseer & Pubmed & Cora-A & NTU2012\\
		\midrule
		Node2Vec & 0.115 & 0.016 & 0.391 & 0.245 & 0.231 & 0.160 & 0.783\\
		DGI & 0.130 & OOM & \textbf{0.548} & 0.401 & \textbf{0.304} & 0.452 & 0.796\\
		GRACE & 0.073 & OOM & 0.444 & 0.333 & 0.167 & 0.379 & 0.746\\
		S$^2$-HHGR & 0.909 & 0.186 & 0.510 & 0.411 & 0.277 & 0.454 & 0.827\\
		TriCL & 0.912 & 0.038 & 0.545 & \textbf{0.441} & 0.300 & \textbf{0.498} & 0.832\\
		Modularity & 0.777 & \textbf{0.434} & 0.450 & 0.338 & 0.250 & 0.334 & 0.745\\
		N - Ricci & 0.962 & OOT & 0.458 & 0.388 & 0.278 & 0.394 & 0.823\\
		\textbf{HyperRCD} & \textbf{0.981} & OOT & 0.484 & 0.412 & 0.288 & 0.423 & \textbf{0.836}\\
		\bottomrule
	\end{tabular}
\end{table}

Table \ref{tab:3} summarizes the empirical performance of all methods, and the best results are marked in bold. HyperRCD achieves state-of-the-art results in 2 of 7 datasets, including the Zoo dataset, where it outperforms N-Ricci, which is also a curvature-driven method, by $1. 9\%$ in normalized mutual information (NMI: 0.981 vs. 0.962), validating its robustness in detecting clusters. On NTU2012, HyperRCD achieves an NMI of 0.836, surpassing TriCL (0.832) and $S^2$-HHGR (0.827), underscoring its ability to handle heterogeneous hyperedge structures. In Table \ref{tab:3}, OOM (out-of-memory) and OOT (out-of-time) denote computational failures due to memory exhaustion and time limit exceedance, respectively.

For intuitive evaluation of HyperRCD's effectiveness for real-world networks, we present visualizations of both the original data and community detection results in Figure \ref{fig:visualization_zoo}. Since the hyperedges may obscure the visualization of vertex communities, we visualize the clique-expanded graphs rather than raw hypergraphs. Due to excessive density in other datasets (with large vertex/hyperedge counts), we limit visualization to the Zoo dataset for clarity. The visualization (Figure \ref{fig:visualization_zoo}) assigns identical colors to nodes within the same ground-truth community, intra-community edges match node colors and inter-community edges use gradient colors. HyperRCD’s Ricci-flow-driven partitioning maintains dense intra-comminity connections and exhibits near-perfect alignment with the ground-truth community labels.
\begin{figure}[htbp] 
	\centering
	\subfigure[Original clique-expanded graph]{ 
		\includegraphics[width=0.45\textwidth]{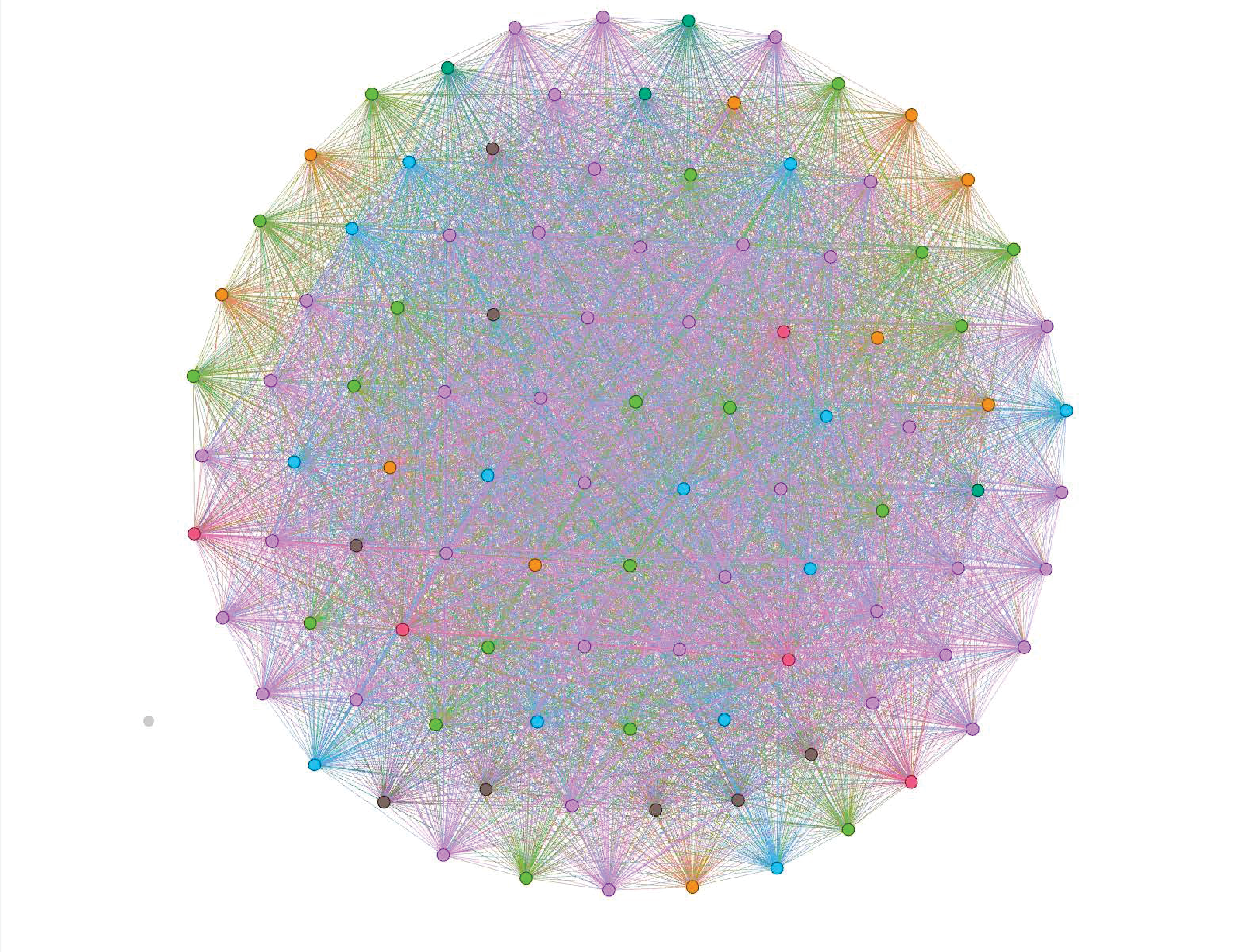} 
	}
	\hfill 
	\subfigure[After HyperRCD processing]{ 
		\includegraphics[width=0.45\textwidth]{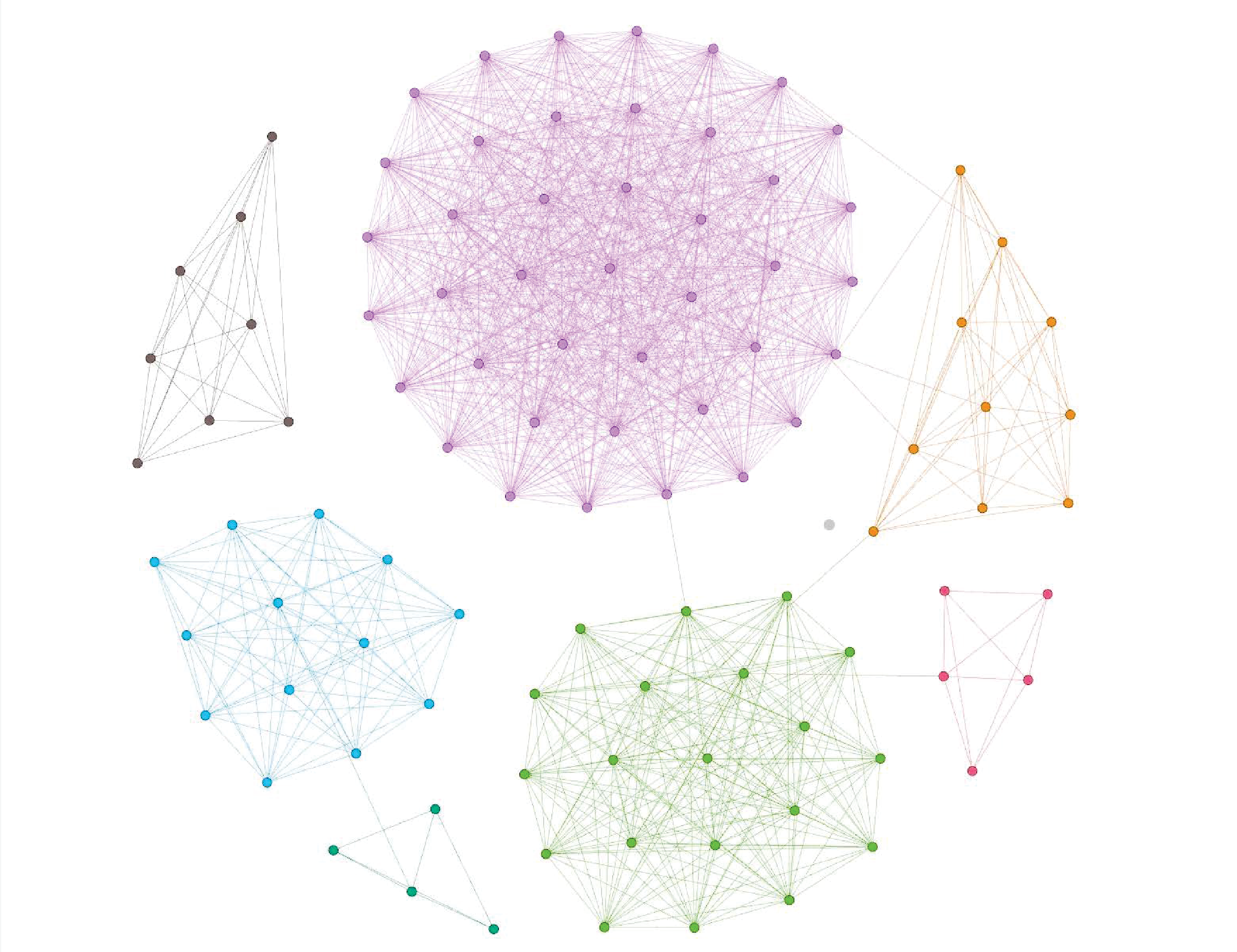}
	}
	\caption{HyperRCD Community Detection on Zoo Dataset: Original vs. Processed} 
	\label{fig:visualization_zoo} 
\end{figure}

Compared against traditional graph embedding approaches (Node2Vec, DGI, GRACE), HyperRCD systematically outperforms these methods across all datasets. For example, on Citeseer, the HyperRCD NMI (0.412) exceeds the Node2Vec score (0.245) by $68\%$, confirming the superiority of hypergraph modeling in capturing higher-order interactions.  

Against advanced hypergraph-specific methods ($S^2$-HHGR, TriCL), HyperRCD leads on Zoo and NTU2012 while maintaining competitive scores on Cora-C (0.484) and Citeseer (0.412). In Pubmed, HyperRCD trails TriCL by a marginal 0.012 NMI, suggesting the potential for improvement in sparse hypergraphs.  

Despite achieving high accuracy on most datasets, HyperRCD fails to process the Mushroom dataset (8,124 nodes) within the 48-hour time limit due to its \(O(ED^3)\) complexity, primarily driven by solving linear programming problems for Wasserstein distance calculations in hypergraphs with large cardinality of hyperedges (e.g., Mushroom’s average hyperedge size of 136.3 nodes). This limitation aligns with theoretical expectations and mirrors challenges faced by other methods: DGI and GRACE encounter memory issues (OOM), while N-Ricci exceeds the time limit (OOT). Although Node2Vec and TriCL complete execution, their NMI scores (0.016 and 0.038, respectively) remain significantly lower than HyperRCD’s performance on other datasets, underscoring a trade-off between scalability and accuracy in hypergraph community detection. These results emphasize the need for algorithmic optimizations, such as approximate Wasserstein distance approximations or parallel computing frameworks, to address computational bottlenecks in large-scale hypergraphs.

In summary, the proposed method HyperRCD demonstrates competitive accuracy across diverse real-world datasets, consistently outperforming traditional graph-based embedding techniques while achieving parity with state-of-the-art hypergraph neural embedding methods on select benchmarks. Specifically, it surpasses Node2Vec, DGI, and GRACE by significant margins and matches TriCL’s performance on NTU2012. These results validate the approach’s ability to leverage hypergraph curvature dynamics for effective community detection, even in real complex network structures.


\section{Conclusion}
We introduce an Ollivier-type Ricci curvature for hypergraphs with higher-order interactions, and establish a curvature-driven Ricci flow for weight evolution, proving its long-time existence. Building on this theoretical foundation, we propose HyperRCD, a novel algorithm for hypergraph community detection, which processes hypergraphs in their native form without graph reduction (clique/star expansion), preserving original higher-order structures. Extensive experiments are conducted on both synthetic and real-world datasets spanning graphs and hypergraphs with diverse structural characteristics. Comparative results demonstrate that HyperRCD achieves SOTA or highly competitive performance across benchmarks, exhibiting remarkable robustness and accuracy on various hypergraph datasets.

\begin{acknowledgments}
This research is supported by National Natural Science Foundation of China (No. 12271039) and the Open Project Program (No. K202303) of Key Laboratory of Mathematics and Complex Systems, Beijing Normal University.
\end{acknowledgments}




\section*{References}

\bibliography{mybib-hypergraph}

\begin{thebibliography}{69}%
\makeatletter
\providecommand \@ifxundefined [1]{%
 \@ifx{#1\undefined}
}%
\providecommand \@ifnum [1]{%
 \ifnum #1\expandafter \@firstoftwo
 \else \expandafter \@secondoftwo
 \fi
}%
\providecommand \@ifx [1]{%
 \ifx #1\expandafter \@firstoftwo
 \else \expandafter \@secondoftwo
 \fi
}%
\providecommand \natexlab [1]{#1}%
\providecommand \enquote  [1]{``#1''}%
\providecommand \bibnamefont  [1]{#1}%
\providecommand \bibfnamefont [1]{#1}%
\providecommand \citenamefont [1]{#1}%
\providecommand \href@noop [0]{\@secondoftwo}%
\providecommand \href [0]{\begingroup \@sanitize@url \@href}%
\providecommand \@href[1]{\@@startlink{#1}\@@href}%
\providecommand \@@href[1]{\endgroup#1\@@endlink}%
\providecommand \@sanitize@url [0]{\catcode `\\12\catcode `\$12\catcode
  `\&12\catcode `\#12\catcode `\^12\catcode `\_12\catcode `\%12\relax}%
\providecommand \@@startlink[1]{}%
\providecommand \@@endlink[0]{}%
\providecommand \url  [0]{\begingroup\@sanitize@url \@url }%
\providecommand \@url [1]{\endgroup\@href {#1}{\urlprefix }}%
\providecommand \urlprefix  [0]{URL }%
\providecommand \Eprint [0]{\href }%
\providecommand \doibase [0]{http://dx.doi.org/}%
\providecommand \selectlanguage [0]{\@gobble}%
\providecommand \bibinfo  [0]{\@secondoftwo}%
\providecommand \bibfield  [0]{\@secondoftwo}%
\providecommand \translation [1]{[#1]}%
\providecommand \BibitemOpen [0]{}%
\providecommand \bibitemStop [0]{}%
\providecommand \bibitemNoStop [0]{.\EOS\space}%
\providecommand \EOS [0]{\spacefactor3000\relax}%
\providecommand \BibitemShut  [1]{\csname bibitem#1\endcsname}%
\let\auto@bib@innerbib\@empty
\bibitem [{\citenamefont {Jost}\ and\ \citenamefont
  {Mulas}(2019)}]{Jost2019hypergraph}%
  \BibitemOpen
  \bibfield  {author} {\bibinfo {author} {\bibfnamefont {J{\"u}rgen}\
  \bibnamefont {Jost}}\ and\ \bibinfo {author} {\bibfnamefont {Raffaella}\
  \bibnamefont {Mulas}},\ }\bibfield  {title} {\enquote {\bibinfo {title}
  {Hypergraph laplace operators for chemical reaction networks},}\ }\href@noop
  {} {\bibfield  {journal} {\bibinfo  {journal} {Adv. Math.}\ }\textbf
  {\bibinfo {volume} {351}},\ \bibinfo {pages} {870--896} (\bibinfo {year}
  {2019})}\BibitemShut {NoStop}%
\bibitem [{\citenamefont {Yu}\ \emph {et~al.}(2021)\citenamefont {Yu},
  \citenamefont {Yin}, \citenamefont {Li}, \citenamefont {Wang}, \citenamefont
  {Hung},\ and\ \citenamefont {Zhang}}]{Yu2021self}%
  \BibitemOpen
  \bibfield  {author} {\bibinfo {author} {\bibfnamefont {Junliang}\
  \bibnamefont {Yu}}, \bibinfo {author} {\bibfnamefont {Hongzhi}\ \bibnamefont
  {Yin}}, \bibinfo {author} {\bibfnamefont {Jundong}\ \bibnamefont {Li}},
  \bibinfo {author} {\bibfnamefont {Qinyong}\ \bibnamefont {Wang}}, \bibinfo
  {author} {\bibfnamefont {Nguyen Quoc~Viet}\ \bibnamefont {Hung}}, \ and\
  \bibinfo {author} {\bibfnamefont {Xiangliang}\ \bibnamefont {Zhang}},\
  }\bibfield  {title} {\enquote {\bibinfo {title} {Self-supervised
  multi-channel hypergraph convolutional network for social recommendation},}\
  }in\ \href@noop {} {\emph {\bibinfo {booktitle} {Proceedings of the web
  conference 2021}}}\ (\bibinfo {year} {2021})\ pp.\ \bibinfo {pages}
  {413--424}\BibitemShut {NoStop}%
\bibitem [{\citenamefont {Yu}\ \emph {et~al.}(2023)\citenamefont {Yu},
  \citenamefont {Yin}, \citenamefont {Xia}, \citenamefont {Chen}, \citenamefont
  {Li},\ and\ \citenamefont {Huang}}]{Yu2023self}%
  \BibitemOpen
  \bibfield  {author} {\bibinfo {author} {\bibfnamefont {Junliang}\
  \bibnamefont {Yu}}, \bibinfo {author} {\bibfnamefont {Hongzhi}\ \bibnamefont
  {Yin}}, \bibinfo {author} {\bibfnamefont {Xin}\ \bibnamefont {Xia}}, \bibinfo
  {author} {\bibfnamefont {Tong}\ \bibnamefont {Chen}}, \bibinfo {author}
  {\bibfnamefont {Jundong}\ \bibnamefont {Li}}, \ and\ \bibinfo {author}
  {\bibfnamefont {Zi}~\bibnamefont {Huang}},\ }\bibfield  {title} {\enquote
  {\bibinfo {title} {Self-supervised learning for recommender systems: A
  survey},}\ }\href@noop {} {\bibfield  {journal} {\bibinfo  {journal} {IEEE
  Trans. on Knowl. and Data Eng.}\ }\textbf {\bibinfo {volume} {36}},\ \bibinfo
  {pages} {335--355} (\bibinfo {year} {2023})}\BibitemShut {NoStop}%
\bibitem [{\citenamefont {Ollivier}(2009)}]{Ollivier2009ricci}%
  \BibitemOpen
  \bibfield  {author} {\bibinfo {author} {\bibfnamefont {Yann}\ \bibnamefont
  {Ollivier}},\ }\bibfield  {title} {\enquote {\bibinfo {title} {Ricci
  curvature of markov chains on metric spaces},}\ }\href@noop {} {\bibfield
  {journal} {\bibinfo  {journal} {J. Funct. Anal.}\ }\textbf {\bibinfo {volume}
  {256}},\ \bibinfo {pages} {810--864} (\bibinfo {year} {2009})}\BibitemShut
  {NoStop}%
\bibitem [{\citenamefont {Lin}\ and\ \citenamefont {Yau}(2010)}]{Lin2010Ricci}%
  \BibitemOpen
  \bibfield  {author} {\bibinfo {author} {\bibfnamefont {Yong}\ \bibnamefont
  {Lin}}\ and\ \bibinfo {author} {\bibfnamefont {Shing-Tung}\ \bibnamefont
  {Yau}},\ }\bibfield  {title} {\enquote {\bibinfo {title} {Ricci curvature and
  eigenvalue estimate on locally finite graphs},}\ }\href@noop {} {\bibfield
  {journal} {\bibinfo  {journal} {Math. Res. Lett.}\ }\textbf {\bibinfo
  {volume} {17}},\ \bibinfo {pages} {343--356} (\bibinfo {year}
  {2010})}\BibitemShut {NoStop}%
\bibitem [{\citenamefont {Lin}\ \emph {et~al.}(2011)\citenamefont {Lin},
  \citenamefont {Lu},\ and\ \citenamefont {Yau}}]{Lin2011Ricci}%
  \BibitemOpen
  \bibfield  {author} {\bibinfo {author} {\bibfnamefont {Yong}\ \bibnamefont
  {Lin}}, \bibinfo {author} {\bibfnamefont {Linyuan}\ \bibnamefont {Lu}}, \
  and\ \bibinfo {author} {\bibfnamefont {Shing-Tung}\ \bibnamefont {Yau}},\
  }\bibfield  {title} {\enquote {\bibinfo {title} {Ricci curvature of
  graphs},}\ }\href@noop {} {\bibfield  {journal} {\bibinfo  {journal} {Tohoku
  Math. J.}\ }\textbf {\bibinfo {volume} {63}},\ \bibinfo {pages} {605--627}
  (\bibinfo {year} {2011})}\BibitemShut {NoStop}%
\bibitem [{\citenamefont {Bai}\ \emph {et~al.}(2021)\citenamefont {Bai},
  \citenamefont {Huang}, \citenamefont {Lu},\ and\ \citenamefont
  {Yau}}]{Bai2021sum}%
  \BibitemOpen
  \bibfield  {author} {\bibinfo {author} {\bibfnamefont {Shuliang}\
  \bibnamefont {Bai}}, \bibinfo {author} {\bibfnamefont {An}~\bibnamefont
  {Huang}}, \bibinfo {author} {\bibfnamefont {Linyuan}\ \bibnamefont {Lu}}, \
  and\ \bibinfo {author} {\bibfnamefont {Shing-Tung}\ \bibnamefont {Yau}},\
  }\bibfield  {title} {\enquote {\bibinfo {title} {On the sum of
  ricci-curvatures for weighted graphs},}\ }\href@noop {} {\bibfield  {journal}
  {\bibinfo  {journal} {Pure Appl. Math. Q.}\ }\textbf {\bibinfo {volume}
  {17}},\ \bibinfo {pages} {1599--1617} (\bibinfo {year} {2021})}\BibitemShut
  {NoStop}%
\bibitem [{\citenamefont {Bauer}\ \emph {et~al.}(2012)\citenamefont {Bauer},
  \citenamefont {Jost},\ and\ \citenamefont {Liu}}]{Bauer2012ollivier}%
  \BibitemOpen
  \bibfield  {author} {\bibinfo {author} {\bibfnamefont {Frank}\ \bibnamefont
  {Bauer}}, \bibinfo {author} {\bibfnamefont {J{\"u}rgen}\ \bibnamefont
  {Jost}}, \ and\ \bibinfo {author} {\bibfnamefont {Shiping}\ \bibnamefont
  {Liu}},\ }\bibfield  {title} {\enquote {\bibinfo {title} {Ollivier--ricci
  curvature and the spectrum of the normalized graph laplace operator},}\
  }\href@noop {} {\bibfield  {journal} {\bibinfo  {journal} {Math. Res. Lett.}\
  }\textbf {\bibinfo {volume} {19}},\ \bibinfo {pages} {1185--1205} (\bibinfo
  {year} {2012})}\BibitemShut {NoStop}%
\bibitem [{\citenamefont {Benson}\ \emph {et~al.}(2021)\citenamefont {Benson},
  \citenamefont {Ralli},\ and\ \citenamefont {Tetali}}]{Benson2021volume}%
  \BibitemOpen
  \bibfield  {author} {\bibinfo {author} {\bibfnamefont {Brian}\ \bibnamefont
  {Benson}}, \bibinfo {author} {\bibfnamefont {Peter}\ \bibnamefont {Ralli}}, \
  and\ \bibinfo {author} {\bibfnamefont {Prasad}\ \bibnamefont {Tetali}},\
  }\bibfield  {title} {\enquote {\bibinfo {title} {Volume growth, curvature,
  and buser-type inequalities in graphs},}\ }\href@noop {} {\bibfield
  {journal} {\bibinfo  {journal} {Int. Math. Res. Not. IMRN}\ }\textbf
  {\bibinfo {volume} {2021}},\ \bibinfo {pages} {17091--17139} (\bibinfo {year}
  {2021})}\BibitemShut {NoStop}%
\bibitem [{\citenamefont {Devriendt}\ and\ \citenamefont
  {Lambiotte}(2022)}]{Devriendt2022discrete}%
  \BibitemOpen
  \bibfield  {author} {\bibinfo {author} {\bibfnamefont {Karel}\ \bibnamefont
  {Devriendt}}\ and\ \bibinfo {author} {\bibfnamefont {Renaud}\ \bibnamefont
  {Lambiotte}},\ }\bibfield  {title} {\enquote {\bibinfo {title} {Discrete
  curvature on graphs from the effective resistance},}\ }\href@noop {}
  {\bibfield  {journal} {\bibinfo  {journal} {J. Phys. Complex.}\ }\textbf
  {\bibinfo {volume} {3}},\ \bibinfo {pages} {025008} (\bibinfo {year}
  {2022})}\BibitemShut {NoStop}%
\bibitem [{\citenamefont {Jost}\ and\ \citenamefont
  {Liu}(2014)}]{Jost2014ollivier}%
  \BibitemOpen
  \bibfield  {author} {\bibinfo {author} {\bibfnamefont {J{\"u}rgen}\
  \bibnamefont {Jost}}\ and\ \bibinfo {author} {\bibfnamefont {Shiping}\
  \bibnamefont {Liu}},\ }\bibfield  {title} {\enquote {\bibinfo {title}
  {Ollivier’s ricci curvature, local clustering and curvature-dimension
  inequalities on graphs},}\ }\href@noop {} {\bibfield  {journal} {\bibinfo
  {journal} {Discrete Comput. Geom.}\ }\textbf {\bibinfo {volume} {51}},\
  \bibinfo {pages} {300--322} (\bibinfo {year} {2014})}\BibitemShut {NoStop}%
\bibitem [{\citenamefont {M{\"u}nch}\ and\ \citenamefont
  {Wojciechowski}(2019)}]{Munch2019ollivier}%
  \BibitemOpen
  \bibfield  {author} {\bibinfo {author} {\bibfnamefont {Florentin}\
  \bibnamefont {M{\"u}nch}}\ and\ \bibinfo {author} {\bibfnamefont
  {Rados{\l}aw~K}\ \bibnamefont {Wojciechowski}},\ }\bibfield  {title}
  {\enquote {\bibinfo {title} {Ollivier ricci curvature for general graph
  laplacians: heat equation, laplacian comparison, non-explosion and diameter
  bounds},}\ }\href@noop {} {\bibfield  {journal} {\bibinfo  {journal} {Adv.
  Math.}\ }\textbf {\bibinfo {volume} {356}},\ \bibinfo {pages} {106759}
  (\bibinfo {year} {2019})}\BibitemShut {NoStop}%
\bibitem [{\citenamefont {Yamada}(2019)}]{Yamada2019ricci}%
  \BibitemOpen
  \bibfield  {author} {\bibinfo {author} {\bibfnamefont {Taiki}\ \bibnamefont
  {Yamada}},\ }\bibfield  {title} {\enquote {\bibinfo {title} {The ricci
  curvature on directed graphs},}\ }\href@noop {} {\bibfield  {journal}
  {\bibinfo  {journal} {J. Korean Math. Soc.}\ }\textbf {\bibinfo {volume}
  {56}},\ \bibinfo {pages} {113--125} (\bibinfo {year} {2019})}\BibitemShut
  {NoStop}%
\bibitem [{\citenamefont {Ozawa}\ \emph {et~al.}(2020)\citenamefont {Ozawa},
  \citenamefont {Sakurai},\ and\ \citenamefont {Yamada}}]{Ozawa2020geometric}%
  \BibitemOpen
  \bibfield  {author} {\bibinfo {author} {\bibfnamefont {Ryunosuke}\
  \bibnamefont {Ozawa}}, \bibinfo {author} {\bibfnamefont {Yohei}\ \bibnamefont
  {Sakurai}}, \ and\ \bibinfo {author} {\bibfnamefont {Taiki}\ \bibnamefont
  {Yamada}},\ }\bibfield  {title} {\enquote {\bibinfo {title} {Geometric and
  spectral properties of directed graphs under a lower ricci curvature
  bound},}\ }\href@noop {} {\bibfield  {journal} {\bibinfo  {journal} {Calc.
  Var. Partial. Differ. Equ.}\ }\textbf {\bibinfo {volume} {59}},\ \bibinfo
  {pages} {1--39} (\bibinfo {year} {2020})}\BibitemShut {NoStop}%
\bibitem [{\citenamefont {Eidi}\ and\ \citenamefont
  {Jost}(2020)}]{Eidi2020ollivier}%
  \BibitemOpen
  \bibfield  {author} {\bibinfo {author} {\bibfnamefont {Marzieh}\ \bibnamefont
  {Eidi}}\ and\ \bibinfo {author} {\bibfnamefont {J{\"u}rgen}\ \bibnamefont
  {Jost}},\ }\bibfield  {title} {\enquote {\bibinfo {title} {Ollivier ricci
  curvature of directed hypergraphs},}\ }\href@noop {} {\bibfield  {journal}
  {\bibinfo  {journal} {Sci. Rep.}\ }\textbf {\bibinfo {volume} {10}},\
  \bibinfo {pages} {12466} (\bibinfo {year} {2020})}\BibitemShut {NoStop}%
\bibitem [{\citenamefont {Eidi}\ \emph {et~al.}(2020)\citenamefont {Eidi},
  \citenamefont {Farzam}, \citenamefont {Leal}, \citenamefont {Samal},\ and\
  \citenamefont {Jost}}]{Eidi2020edge}%
  \BibitemOpen
  \bibfield  {author} {\bibinfo {author} {\bibfnamefont {Marzieh}\ \bibnamefont
  {Eidi}}, \bibinfo {author} {\bibfnamefont {Amirhossein}\ \bibnamefont
  {Farzam}}, \bibinfo {author} {\bibfnamefont {Wilmer}\ \bibnamefont {Leal}},
  \bibinfo {author} {\bibfnamefont {Areejit}\ \bibnamefont {Samal}}, \ and\
  \bibinfo {author} {\bibfnamefont {J{\"u}rgen}\ \bibnamefont {Jost}},\
  }\bibfield  {title} {\enquote {\bibinfo {title} {Edge-based analysis of
  networks: curvatures of graphs and hypergraphs},}\ }\href@noop {} {\bibfield
  {journal} {\bibinfo  {journal} {Theory Biosci.}\ }\textbf {\bibinfo {volume}
  {139}},\ \bibinfo {pages} {337--348} (\bibinfo {year} {2020})}\BibitemShut
  {NoStop}%
\bibitem [{\citenamefont {Leal}\ \emph {et~al.}(2021)\citenamefont {Leal},
  \citenamefont {Restrepo}, \citenamefont {Stadler},\ and\ \citenamefont
  {Jost}}]{Leal2021forman}%
  \BibitemOpen
  \bibfield  {author} {\bibinfo {author} {\bibfnamefont {Wilmer}\ \bibnamefont
  {Leal}}, \bibinfo {author} {\bibfnamefont {Guillermo}\ \bibnamefont
  {Restrepo}}, \bibinfo {author} {\bibfnamefont {Peter~F}\ \bibnamefont
  {Stadler}}, \ and\ \bibinfo {author} {\bibfnamefont {J{\"u}rgen}\
  \bibnamefont {Jost}},\ }\bibfield  {title} {\enquote {\bibinfo {title}
  {Forman--ricci curvature for hypergraphs},}\ }\href@noop {} {\bibfield
  {journal} {\bibinfo  {journal} {Adv. Complex Syst.}\ }\textbf {\bibinfo
  {volume} {24}},\ \bibinfo {pages} {2150003} (\bibinfo {year}
  {2021})}\BibitemShut {NoStop}%
\bibitem [{\citenamefont {Leal}\ \emph {et~al.}(2020)\citenamefont {Leal},
  \citenamefont {Eidi},\ and\ \citenamefont {Jost}}]{Leal2020ricci}%
  \BibitemOpen
  \bibfield  {author} {\bibinfo {author} {\bibfnamefont {Wilmer}\ \bibnamefont
  {Leal}}, \bibinfo {author} {\bibfnamefont {Marzieh}\ \bibnamefont {Eidi}}, \
  and\ \bibinfo {author} {\bibfnamefont {J{\"u}rgen}\ \bibnamefont {Jost}},\
  }\bibfield  {title} {\enquote {\bibinfo {title} {Ricci curvature of random
  and empirical directed hypernetworks},}\ }\href@noop {} {\bibfield  {journal}
  {\bibinfo  {journal} {Appl. Netw. Sci.}\ }\textbf {\bibinfo {volume} {5}},\
  \bibinfo {pages} {1--14} (\bibinfo {year} {2020})}\BibitemShut {NoStop}%
\bibitem [{\citenamefont {Coupette}\ \emph {et~al.}(2022)\citenamefont
  {Coupette}, \citenamefont {Dalleiger},\ and\ \citenamefont
  {Rieck}}]{Coupette2022ollivier}%
  \BibitemOpen
  \bibfield  {author} {\bibinfo {author} {\bibfnamefont {Corinna}\ \bibnamefont
  {Coupette}}, \bibinfo {author} {\bibfnamefont {Sebastian}\ \bibnamefont
  {Dalleiger}}, \ and\ \bibinfo {author} {\bibfnamefont {Bastian}\ \bibnamefont
  {Rieck}},\ }\bibfield  {title} {\enquote {\bibinfo {title} {Ollivier-ricci
  curvature for hypergraphs: A unified framework},}\ }\href@noop {} {\bibfield
  {journal} {\bibinfo  {journal} {arXiv: 2210.12048}\ } (\bibinfo {year}
  {2022})}\BibitemShut {NoStop}%
\bibitem [{\citenamefont {Hamilton}(1982)}]{Hamilton1982three}%
  \BibitemOpen
  \bibfield  {author} {\bibinfo {author} {\bibfnamefont {Richard~S}\
  \bibnamefont {Hamilton}},\ }\bibfield  {title} {\enquote {\bibinfo {title}
  {Three-manifolds with positive ricci curvature},}\ }\href@noop {} {\bibfield
  {journal} {\bibinfo  {journal} {J. Differ. Geom.}\ }\textbf {\bibinfo
  {volume} {17}},\ \bibinfo {pages} {255--306} (\bibinfo {year}
  {1982})}\BibitemShut {NoStop}%
\bibitem [{\citenamefont {Perelman}(2002)}]{Perelman2002entropy}%
  \BibitemOpen
  \bibfield  {author} {\bibinfo {author} {\bibfnamefont {Grisha}\ \bibnamefont
  {Perelman}},\ }\bibfield  {title} {\enquote {\bibinfo {title} {The entropy
  formula for the ricci flow and its geometric applications},}\ }\href@noop {}
  {\bibfield  {journal} {\bibinfo  {journal} {arXiv: math/0211159}\ } (\bibinfo
  {year} {2002})}\BibitemShut {NoStop}%
\bibitem [{\citenamefont {Perelman}(2003{\natexlab{a}})}]{Perelman2003finite}%
  \BibitemOpen
  \bibfield  {author} {\bibinfo {author} {\bibfnamefont {Grisha}\ \bibnamefont
  {Perelman}},\ }\bibfield  {title} {\enquote {\bibinfo {title} {Finite
  extinction time for the solutions to the ricci flow on certain
  three-manifolds},}\ }\href@noop {} {\bibfield  {journal} {\bibinfo  {journal}
  {arXiv: math/0307245}\ } (\bibinfo {year} {2003}{\natexlab{a}})}\BibitemShut
  {NoStop}%
\bibitem [{\citenamefont {Perelman}(2003{\natexlab{b}})}]{Perelman2003ricci}%
  \BibitemOpen
  \bibfield  {author} {\bibinfo {author} {\bibfnamefont {Grisha}\ \bibnamefont
  {Perelman}},\ }\bibfield  {title} {\enquote {\bibinfo {title} {Ricci flow
  with surgery on three-manifolds},}\ }\href@noop {} {\bibfield  {journal}
  {\bibinfo  {journal} {arXiv: math/0303109}\ } (\bibinfo {year}
  {2003}{\natexlab{b}})}\BibitemShut {NoStop}%
\bibitem [{\citenamefont {Jin}\ \emph {et~al.}(2008)\citenamefont {Jin},
  \citenamefont {Kim}, \citenamefont {Luo},\ and\ \citenamefont
  {Gu}}]{Jin2008discrete}%
  \BibitemOpen
  \bibfield  {author} {\bibinfo {author} {\bibfnamefont {Miao}\ \bibnamefont
  {Jin}}, \bibinfo {author} {\bibfnamefont {Junho}\ \bibnamefont {Kim}},
  \bibinfo {author} {\bibfnamefont {Feng}\ \bibnamefont {Luo}}, \ and\ \bibinfo
  {author} {\bibfnamefont {Xianfeng}\ \bibnamefont {Gu}},\ }\bibfield  {title}
  {\enquote {\bibinfo {title} {Discrete surface ricci flow},}\ }\href@noop {}
  {\bibfield  {journal} {\bibinfo  {journal} {IEEE Trans. Vis. Comput.
  Graphics}\ }\textbf {\bibinfo {volume} {14}},\ \bibinfo {pages} {1030--1043}
  (\bibinfo {year} {2008})}\BibitemShut {NoStop}%
\bibitem [{\citenamefont {Sarkar}\ \emph {et~al.}(2009)\citenamefont {Sarkar},
  \citenamefont {Yin}, \citenamefont {Gao}, \citenamefont {Luo},\ and\
  \citenamefont {Gu}}]{Sarkar2009greedy}%
  \BibitemOpen
  \bibfield  {author} {\bibinfo {author} {\bibfnamefont {Rik}\ \bibnamefont
  {Sarkar}}, \bibinfo {author} {\bibfnamefont {Xiaotian}\ \bibnamefont {Yin}},
  \bibinfo {author} {\bibfnamefont {Jie}\ \bibnamefont {Gao}}, \bibinfo
  {author} {\bibfnamefont {Feng}\ \bibnamefont {Luo}}, \ and\ \bibinfo {author}
  {\bibfnamefont {Xianfeng~David}\ \bibnamefont {Gu}},\ }\bibfield  {title}
  {\enquote {\bibinfo {title} {Greedy routing with guaranteed delivery using
  ricci flows},}\ }in\ \href@noop {} {\emph {\bibinfo {booktitle} {2009
  International Conference on Information Processing in Sensor Networks}}}\
  (\bibinfo {organization} {IEEE},\ \bibinfo {year} {2009})\ pp.\ \bibinfo
  {pages} {121--132}\BibitemShut {NoStop}%
\bibitem [{\citenamefont {Shepherd}\ \emph {et~al.}(2022)\citenamefont
  {Shepherd}, \citenamefont {Gu},\ and\ \citenamefont
  {Hughes}}]{Shepherd2022feature}%
  \BibitemOpen
  \bibfield  {author} {\bibinfo {author} {\bibfnamefont {Kendrick~M}\
  \bibnamefont {Shepherd}}, \bibinfo {author} {\bibfnamefont {Xianfeng~David}\
  \bibnamefont {Gu}}, \ and\ \bibinfo {author} {\bibfnamefont {Thomas~JR}\
  \bibnamefont {Hughes}},\ }\bibfield  {title} {\enquote {\bibinfo {title}
  {Feature-aware reconstruction of trimmed splines using ricci flow with metric
  optimization},}\ }\href@noop {} {\bibfield  {journal} {\bibinfo  {journal}
  {Comput. Methods Appl. Mech. Engrg.}\ }\textbf {\bibinfo {volume} {402}},\
  \bibinfo {pages} {115555} (\bibinfo {year} {2022})}\BibitemShut {NoStop}%
\bibitem [{\citenamefont {Camacho}\ \emph {et~al.}(2020)\citenamefont
  {Camacho}, \citenamefont {Panizo-LLedot}, \citenamefont {Bello-Orgaz},
  \citenamefont {Gonzalez-Pardo},\ and\ \citenamefont
  {Cambria}}]{Camacho2020four}%
  \BibitemOpen
  \bibfield  {author} {\bibinfo {author} {\bibfnamefont {David}\ \bibnamefont
  {Camacho}}, \bibinfo {author} {\bibfnamefont {Angel}\ \bibnamefont
  {Panizo-LLedot}}, \bibinfo {author} {\bibfnamefont {Gema}\ \bibnamefont
  {Bello-Orgaz}}, \bibinfo {author} {\bibfnamefont {Antonio}\ \bibnamefont
  {Gonzalez-Pardo}}, \ and\ \bibinfo {author} {\bibfnamefont {Erik}\
  \bibnamefont {Cambria}},\ }\bibfield  {title} {\enquote {\bibinfo {title}
  {The four dimensions of social network analysis: An overview of research
  methods, applications, and software tools},}\ }\href@noop {} {\bibfield
  {journal} {\bibinfo  {journal} {Inf. Fusion}\ }\textbf {\bibinfo {volume}
  {63}},\ \bibinfo {pages} {88--120} (\bibinfo {year} {2020})}\BibitemShut
  {NoStop}%
\bibitem [{\citenamefont {Fortunato}(2010)}]{Fortunato2010community}%
  \BibitemOpen
  \bibfield  {author} {\bibinfo {author} {\bibfnamefont {Santo}\ \bibnamefont
  {Fortunato}},\ }\bibfield  {title} {\enquote {\bibinfo {title} {Community
  detection in graphs},}\ }\href@noop {} {\bibfield  {journal} {\bibinfo
  {journal} {Phys. Rep.}\ }\textbf {\bibinfo {volume} {486}},\ \bibinfo {pages}
  {75--174} (\bibinfo {year} {2010})}\BibitemShut {NoStop}%
\bibitem [{\citenamefont {Fang}\ \emph {et~al.}(2020)\citenamefont {Fang},
  \citenamefont {Huang}, \citenamefont {Qin}, \citenamefont {Zhang},
  \citenamefont {Zhang}, \citenamefont {Cheng},\ and\ \citenamefont
  {Lin}}]{Fang2020survey}%
  \BibitemOpen
  \bibfield  {author} {\bibinfo {author} {\bibfnamefont {Yixiang}\ \bibnamefont
  {Fang}}, \bibinfo {author} {\bibfnamefont {Xin}\ \bibnamefont {Huang}},
  \bibinfo {author} {\bibfnamefont {Lu}~\bibnamefont {Qin}}, \bibinfo {author}
  {\bibfnamefont {Ying}\ \bibnamefont {Zhang}}, \bibinfo {author}
  {\bibfnamefont {Wenjie}\ \bibnamefont {Zhang}}, \bibinfo {author}
  {\bibfnamefont {Reynold}\ \bibnamefont {Cheng}}, \ and\ \bibinfo {author}
  {\bibfnamefont {Xuemin}\ \bibnamefont {Lin}},\ }\bibfield  {title} {\enquote
  {\bibinfo {title} {A survey of community search over big graphs},}\
  }\href@noop {} {\bibfield  {journal} {\bibinfo  {journal} {The VLDB Journal}\
  }\textbf {\bibinfo {volume} {29}},\ \bibinfo {pages} {353--392} (\bibinfo
  {year} {2020})}\BibitemShut {NoStop}%
\bibitem [{\citenamefont {Leskovec}\ \emph {et~al.}(2010)\citenamefont
  {Leskovec}, \citenamefont {Lang},\ and\ \citenamefont
  {Mahoney}}]{Leskovec2010empirical}%
  \BibitemOpen
  \bibfield  {author} {\bibinfo {author} {\bibfnamefont {Jure}\ \bibnamefont
  {Leskovec}}, \bibinfo {author} {\bibfnamefont {Kevin~J}\ \bibnamefont
  {Lang}}, \ and\ \bibinfo {author} {\bibfnamefont {Michael}\ \bibnamefont
  {Mahoney}},\ }\bibfield  {title} {\enquote {\bibinfo {title} {Empirical
  comparison of algorithms for network community detection},}\ }in\ \href@noop
  {} {\emph {\bibinfo {booktitle} {Proceedings of the 19th international
  conference on World wide web}}}\ (\bibinfo {year} {2010})\ pp.\ \bibinfo
  {pages} {631--640}\BibitemShut {NoStop}%
\bibitem [{\citenamefont {Li}\ \emph {et~al.}(2020)\citenamefont {Li},
  \citenamefont {Cai}, \citenamefont {Deng}, \citenamefont {Wang},
  \citenamefont {Sellis},\ and\ \citenamefont {Xia}}]{Li2020community}%
  \BibitemOpen
  \bibfield  {author} {\bibinfo {author} {\bibfnamefont {Jianxin}\ \bibnamefont
  {Li}}, \bibinfo {author} {\bibfnamefont {Taotao}\ \bibnamefont {Cai}},
  \bibinfo {author} {\bibfnamefont {Ke}~\bibnamefont {Deng}}, \bibinfo {author}
  {\bibfnamefont {Xinjue}\ \bibnamefont {Wang}}, \bibinfo {author}
  {\bibfnamefont {Timos}\ \bibnamefont {Sellis}}, \ and\ \bibinfo {author}
  {\bibfnamefont {Feng}\ \bibnamefont {Xia}},\ }\bibfield  {title} {\enquote
  {\bibinfo {title} {Community-diversified influence maximization in social
  networks},}\ }\href@noop {} {\bibfield  {journal} {\bibinfo  {journal} {Inf.
  Syst.}\ }\textbf {\bibinfo {volume} {92}},\ \bibinfo {pages} {101522}
  (\bibinfo {year} {2020})}\BibitemShut {NoStop}%
\bibitem [{\citenamefont {Rossetti}\ and\ \citenamefont
  {Cazabet}(2018)}]{Rossetti2018community}%
  \BibitemOpen
  \bibfield  {author} {\bibinfo {author} {\bibfnamefont {Giulio}\ \bibnamefont
  {Rossetti}}\ and\ \bibinfo {author} {\bibfnamefont {R{\'e}my}\ \bibnamefont
  {Cazabet}},\ }\bibfield  {title} {\enquote {\bibinfo {title} {Community
  discovery in dynamic networks: a survey},}\ }\href@noop {} {\bibfield
  {journal} {\bibinfo  {journal} {ACM Comput. Surv.}\ }\textbf {\bibinfo
  {volume} {51}},\ \bibinfo {pages} {1--37} (\bibinfo {year}
  {2018})}\BibitemShut {NoStop}%
\bibitem [{\citenamefont {Su}\ \emph {et~al.}(2022)\citenamefont {Su},
  \citenamefont {Xue}, \citenamefont {Liu}, \citenamefont {Wu}, \citenamefont
  {Yang}, \citenamefont {Zhou}, \citenamefont {Hu}, \citenamefont {Paris},
  \citenamefont {Nepal}, \citenamefont {Jin} \emph
  {et~al.}}]{Su2022comprehensive}%
  \BibitemOpen
  \bibfield  {author} {\bibinfo {author} {\bibfnamefont {Xing}\ \bibnamefont
  {Su}}, \bibinfo {author} {\bibfnamefont {Shan}\ \bibnamefont {Xue}}, \bibinfo
  {author} {\bibfnamefont {Fanzhen}\ \bibnamefont {Liu}}, \bibinfo {author}
  {\bibfnamefont {Jia}\ \bibnamefont {Wu}}, \bibinfo {author} {\bibfnamefont
  {Jian}\ \bibnamefont {Yang}}, \bibinfo {author} {\bibfnamefont {Chuan}\
  \bibnamefont {Zhou}}, \bibinfo {author} {\bibfnamefont {Wenbin}\ \bibnamefont
  {Hu}}, \bibinfo {author} {\bibfnamefont {Cecile}\ \bibnamefont {Paris}},
  \bibinfo {author} {\bibfnamefont {Surya}\ \bibnamefont {Nepal}}, \bibinfo
  {author} {\bibfnamefont {Di}~\bibnamefont {Jin}},  \emph {et~al.},\
  }\bibfield  {title} {\enquote {\bibinfo {title} {A comprehensive survey on
  community detection with deep learning},}\ }\href@noop {} {\bibfield
  {journal} {\bibinfo  {journal} {IEEE Trans. Neural Netw. Learn. Syst.}\ }
  (\bibinfo {year} {2022})}\BibitemShut {NoStop}%
\bibitem [{\citenamefont {Ni}\ \emph {et~al.}(2019)\citenamefont {Ni},
  \citenamefont {Lin}, \citenamefont {Luo},\ and\ \citenamefont
  {Gao}}]{Ni2019community}%
  \BibitemOpen
  \bibfield  {author} {\bibinfo {author} {\bibfnamefont {Chien-Chun}\
  \bibnamefont {Ni}}, \bibinfo {author} {\bibfnamefont {Yuyao}\ \bibnamefont
  {Lin}}, \bibinfo {author} {\bibfnamefont {Feng}\ \bibnamefont {Luo}}, \ and\
  \bibinfo {author} {\bibfnamefont {Jie}\ \bibnamefont {Gao}},\ }\bibfield
  {title} {\enquote {\bibinfo {title} {Community detection on networks with
  ricci flow},}\ }\href@noop {} {\bibfield  {journal} {\bibinfo  {journal}
  {Sci. Rep.}\ }\textbf {\bibinfo {volume} {9}},\ \bibinfo {pages} {9984}
  (\bibinfo {year} {2019})}\BibitemShut {NoStop}%
\bibitem [{\citenamefont {Sia}\ \emph {et~al.}(2019)\citenamefont {Sia},
  \citenamefont {Jonckheere},\ and\ \citenamefont {Bogdan}}]{Sia2019ollivier}%
  \BibitemOpen
  \bibfield  {author} {\bibinfo {author} {\bibfnamefont {Jayson}\ \bibnamefont
  {Sia}}, \bibinfo {author} {\bibfnamefont {Edmond}\ \bibnamefont
  {Jonckheere}}, \ and\ \bibinfo {author} {\bibfnamefont {Paul}\ \bibnamefont
  {Bogdan}},\ }\bibfield  {title} {\enquote {\bibinfo {title} {Ollivier-ricci
  curvature-based method to community detection in complex networks},}\
  }\href@noop {} {\bibfield  {journal} {\bibinfo  {journal} {Sci. Rep.}\
  }\textbf {\bibinfo {volume} {9}},\ \bibinfo {pages} {9800} (\bibinfo {year}
  {2019})}\BibitemShut {NoStop}%
\bibitem [{\citenamefont {Lai}\ \emph {et~al.}(2022)\citenamefont {Lai},
  \citenamefont {Bai},\ and\ \citenamefont {Lin}}]{Lai2022normalized}%
  \BibitemOpen
  \bibfield  {author} {\bibinfo {author} {\bibfnamefont {Xin}\ \bibnamefont
  {Lai}}, \bibinfo {author} {\bibfnamefont {Shuliang}\ \bibnamefont {Bai}}, \
  and\ \bibinfo {author} {\bibfnamefont {Yong}\ \bibnamefont {Lin}},\
  }\bibfield  {title} {\enquote {\bibinfo {title} {Normalized discrete ricci
  flow used in community detection},}\ }\href@noop {} {\bibfield  {journal}
  {\bibinfo  {journal} {Phys. A Stat. Mech. Appl.}\ }\textbf {\bibinfo {volume}
  {597}},\ \bibinfo {pages} {127251} (\bibinfo {year} {2022})}\BibitemShut
  {NoStop}%
\bibitem [{\citenamefont {Bai}\ \emph {et~al.}(2024)\citenamefont {Bai},
  \citenamefont {Lin}, \citenamefont {Lu}, \citenamefont {Wang},\ and\
  \citenamefont {Yau}}]{Bai2024ollivier}%
  \BibitemOpen
  \bibfield  {author} {\bibinfo {author} {\bibfnamefont {Shuliang}\
  \bibnamefont {Bai}}, \bibinfo {author} {\bibfnamefont {Yong}\ \bibnamefont
  {Lin}}, \bibinfo {author} {\bibfnamefont {Linyuan}\ \bibnamefont {Lu}},
  \bibinfo {author} {\bibfnamefont {Zhiyu}\ \bibnamefont {Wang}}, \ and\
  \bibinfo {author} {\bibfnamefont {Shing-Tung}\ \bibnamefont {Yau}},\
  }\bibfield  {title} {\enquote {\bibinfo {title} {Ollivier ricci-flow on
  weighted graphs},}\ }\href@noop {} {\bibfield  {journal} {\bibinfo  {journal}
  {Am. J. Math.}\ }\textbf {\bibinfo {volume} {146}},\ \bibinfo {pages}
  {1723--1747} (\bibinfo {year} {2024})}\BibitemShut {NoStop}%
\bibitem [{\citenamefont {Cushing}\ \emph {et~al.}(2023)\citenamefont
  {Cushing}, \citenamefont {Kamtue}, \citenamefont {Liu}, \citenamefont
  {M{\"u}nch}, \citenamefont {Peyerimhoff},\ and\ \citenamefont
  {Snodgrass}}]{Cushing2023bakry}%
  \BibitemOpen
  \bibfield  {author} {\bibinfo {author} {\bibfnamefont {David}\ \bibnamefont
  {Cushing}}, \bibinfo {author} {\bibfnamefont {Supanat}\ \bibnamefont
  {Kamtue}}, \bibinfo {author} {\bibfnamefont {Shiping}\ \bibnamefont {Liu}},
  \bibinfo {author} {\bibfnamefont {Florentin}\ \bibnamefont {M{\"u}nch}},
  \bibinfo {author} {\bibfnamefont {Norbert}\ \bibnamefont {Peyerimhoff}}, \
  and\ \bibinfo {author} {\bibfnamefont {Ben}\ \bibnamefont {Snodgrass}},\
  }\bibfield  {title} {\enquote {\bibinfo {title} {Bakry--{\'e}mery curvature
  sharpness and curvature flow in finite weighted graphs: implementation},}\
  }\href@noop {} {\bibfield  {journal} {\bibinfo  {journal} {Axioms}\ }\textbf
  {\bibinfo {volume} {12}},\ \bibinfo {pages} {577} (\bibinfo {year}
  {2023})}\BibitemShut {NoStop}%
\bibitem [{\citenamefont {Cushing}\ \emph {et~al.}(2025)\citenamefont
  {Cushing}, \citenamefont {Kamtue}, \citenamefont {Liu}, \citenamefont
  {M{\"u}nch}, \citenamefont {Peyerimhoff},\ and\ \citenamefont
  {Snodgrass}}]{Cushing2025bakry}%
  \BibitemOpen
  \bibfield  {author} {\bibinfo {author} {\bibfnamefont {David}\ \bibnamefont
  {Cushing}}, \bibinfo {author} {\bibfnamefont {Supanat}\ \bibnamefont
  {Kamtue}}, \bibinfo {author} {\bibfnamefont {Shiping}\ \bibnamefont {Liu}},
  \bibinfo {author} {\bibfnamefont {Florentin}\ \bibnamefont {M{\"u}nch}},
  \bibinfo {author} {\bibfnamefont {Norbert}\ \bibnamefont {Peyerimhoff}}, \
  and\ \bibinfo {author} {\bibfnamefont {Ben}\ \bibnamefont {Snodgrass}},\
  }\bibfield  {title} {\enquote {\bibinfo {title} {Bakry-{\'e}mery curvature
  sharpness and curvature flow in finite weighted graphs: theory},}\
  }\href@noop {} {\bibfield  {journal} {\bibinfo  {journal} {Manuscripta
  mathematica}\ }\textbf {\bibinfo {volume} {176}},\ \bibinfo {pages} {11}
  (\bibinfo {year} {2025})}\BibitemShut {NoStop}%
\bibitem [{\citenamefont {Hua}\ \emph {et~al.}(2024)\citenamefont {Hua},
  \citenamefont {Lin},\ and\ \citenamefont {Wang}}]{Hua2024version}%
  \BibitemOpen
  \bibfield  {author} {\bibinfo {author} {\bibfnamefont {Bobo}\ \bibnamefont
  {Hua}}, \bibinfo {author} {\bibfnamefont {Yong}\ \bibnamefont {Lin}}, \ and\
  \bibinfo {author} {\bibfnamefont {Tao}\ \bibnamefont {Wang}},\ }\bibfield
  {title} {\enquote {\bibinfo {title} {A version of bakry-\'{E}mery ricci flow
  on a finite graph},}\ }\href@noop {} {\bibfield  {journal} {\bibinfo
  {journal} {arXiv: 2402.07475}\ } (\bibinfo {year} {2024})}\BibitemShut
  {NoStop}%
\bibitem [{\citenamefont {Chien}\ \emph {et~al.}(2018)\citenamefont {Chien},
  \citenamefont {Lin},\ and\ \citenamefont {Wang}}]{Chien2018community}%
  \BibitemOpen
  \bibfield  {author} {\bibinfo {author} {\bibfnamefont {I}~\bibnamefont
  {Chien}}, \bibinfo {author} {\bibfnamefont {Chung-Yi}\ \bibnamefont {Lin}}, \
  and\ \bibinfo {author} {\bibfnamefont {I-Hsiang}\ \bibnamefont {Wang}},\
  }\bibfield  {title} {\enquote {\bibinfo {title} {Community detection in
  hypergraphs: Optimal statistical limit and efficient algorithms},}\ }in\
  \href@noop {} {\emph {\bibinfo {booktitle} {International Conference on
  Artificial Intelligence and Statistics}}}\ (\bibinfo {organization} {PMLR},\
  \bibinfo {year} {2018})\ pp.\ \bibinfo {pages} {871--879}\BibitemShut
  {NoStop}%
\bibitem [{\citenamefont {Ke}\ \emph {et~al.}(2019)\citenamefont {Ke},
  \citenamefont {Shi},\ and\ \citenamefont {Xia}}]{Ke2019community}%
  \BibitemOpen
  \bibfield  {author} {\bibinfo {author} {\bibfnamefont {Zheng~Tracy}\
  \bibnamefont {Ke}}, \bibinfo {author} {\bibfnamefont {Feng}\ \bibnamefont
  {Shi}}, \ and\ \bibinfo {author} {\bibfnamefont {Dong}\ \bibnamefont {Xia}},\
  }\bibfield  {title} {\enquote {\bibinfo {title} {Community detection for
  hypergraph networks via regularized tensor power iteration},}\ }\href@noop {}
  {\bibfield  {journal} {\bibinfo  {journal} {arXiv: 1909.06503}\ } (\bibinfo
  {year} {2019})}\BibitemShut {NoStop}%
\bibitem [{\citenamefont {Carletti}\ \emph {et~al.}(2021)\citenamefont
  {Carletti}, \citenamefont {Fanelli},\ and\ \citenamefont
  {Lambiotte}}]{Carletti2021random}%
  \BibitemOpen
  \bibfield  {author} {\bibinfo {author} {\bibfnamefont {Timoteo}\ \bibnamefont
  {Carletti}}, \bibinfo {author} {\bibfnamefont {Duccio}\ \bibnamefont
  {Fanelli}}, \ and\ \bibinfo {author} {\bibfnamefont {Renaud}\ \bibnamefont
  {Lambiotte}},\ }\bibfield  {title} {\enquote {\bibinfo {title} {Random walks
  and community detection in hypergraphs},}\ }\href@noop {} {\bibfield
  {journal} {\bibinfo  {journal} {J. Phys. Complex.}\ }\textbf {\bibinfo
  {volume} {2}},\ \bibinfo {pages} {015011} (\bibinfo {year}
  {2021})}\BibitemShut {NoStop}%
\bibitem [{\citenamefont {Ruggeri}\ \emph {et~al.}(2023)\citenamefont
  {Ruggeri}, \citenamefont {Contisciani}, \citenamefont {Battiston},\ and\
  \citenamefont {De~Bacco}}]{Ruggeri2023community}%
  \BibitemOpen
  \bibfield  {author} {\bibinfo {author} {\bibfnamefont {Nicol{\`o}}\
  \bibnamefont {Ruggeri}}, \bibinfo {author} {\bibfnamefont {Martina}\
  \bibnamefont {Contisciani}}, \bibinfo {author} {\bibfnamefont {Federico}\
  \bibnamefont {Battiston}}, \ and\ \bibinfo {author} {\bibfnamefont
  {Caterina}\ \bibnamefont {De~Bacco}},\ }\bibfield  {title} {\enquote
  {\bibinfo {title} {Community detection in large hypergraphs},}\ }\href@noop
  {} {\bibfield  {journal} {\bibinfo  {journal} {Sci. Adv.}\ }\textbf {\bibinfo
  {volume} {9}},\ \bibinfo {pages} {eadg9159} (\bibinfo {year}
  {2023})}\BibitemShut {NoStop}%
\bibitem [{\citenamefont {Kritschgau}\ \emph {et~al.}(2024)\citenamefont
  {Kritschgau}, \citenamefont {Kaiser}, \citenamefont {Alvarado~Rodriguez},
  \citenamefont {Amburg}, \citenamefont {Bolkema}, \citenamefont {Grubb},
  \citenamefont {Lan}, \citenamefont {Maleki}, \citenamefont {Chodrow},\ and\
  \citenamefont {Kay}}]{Kritschgau2024community}%
  \BibitemOpen
  \bibfield  {author} {\bibinfo {author} {\bibfnamefont {J{\"u}rgen}\
  \bibnamefont {Kritschgau}}, \bibinfo {author} {\bibfnamefont {Daniel}\
  \bibnamefont {Kaiser}}, \bibinfo {author} {\bibfnamefont {Oliver}\
  \bibnamefont {Alvarado~Rodriguez}}, \bibinfo {author} {\bibfnamefont {Ilya}\
  \bibnamefont {Amburg}}, \bibinfo {author} {\bibfnamefont {Jessalyn}\
  \bibnamefont {Bolkema}}, \bibinfo {author} {\bibfnamefont {Thomas}\
  \bibnamefont {Grubb}}, \bibinfo {author} {\bibfnamefont {Fangfei}\
  \bibnamefont {Lan}}, \bibinfo {author} {\bibfnamefont {Sepideh}\ \bibnamefont
  {Maleki}}, \bibinfo {author} {\bibfnamefont {Phil}\ \bibnamefont {Chodrow}},
  \ and\ \bibinfo {author} {\bibfnamefont {Bill}\ \bibnamefont {Kay}},\
  }\bibfield  {title} {\enquote {\bibinfo {title} {Community detection in
  hypergraphs via mutual information maximization},}\ }\href@noop {} {\bibfield
   {journal} {\bibinfo  {journal} {Sci. Rep.}\ }\textbf {\bibinfo {volume}
  {14}},\ \bibinfo {pages} {6933} (\bibinfo {year} {2024})}\BibitemShut
  {NoStop}%
\bibitem [{\citenamefont {Hacquard}(2024)}]{Hacquard2024hypergraph}%
  \BibitemOpen
  \bibfield  {author} {\bibinfo {author} {\bibfnamefont {Olympio}\ \bibnamefont
  {Hacquard}},\ }\bibfield  {title} {\enquote {\bibinfo {title} {Hypergraph
  clustering using ricci curvature: an edge transport perspective},}\
  }\href@noop {} {\bibfield  {journal} {\bibinfo  {journal} {arXiv:
  2412.15695}\ } (\bibinfo {year} {2024})}\BibitemShut {NoStop}%
\bibitem [{\citenamefont {Ma}\ and\ \citenamefont
  {Yang}(2024{\natexlab{a}})}]{Ma2024modified}%
  \BibitemOpen
  \bibfield  {author} {\bibinfo {author} {\bibfnamefont {Jicheng}\ \bibnamefont
  {Ma}}\ and\ \bibinfo {author} {\bibfnamefont {Yunyan}\ \bibnamefont {Yang}},\
  }\bibfield  {title} {\enquote {\bibinfo {title} {A modified ricci flow on
  arbitrary weighted graph},}\ }\href@noop {} {\bibfield  {journal} {\bibinfo
  {journal} {arXiv: 2408.09435}\ } (\bibinfo {year}
  {2024}{\natexlab{a}})}\BibitemShut {NoStop}%
\bibitem [{\citenamefont {Ma}\ and\ \citenamefont
  {Yang}(2024{\natexlab{b}})}]{Ma2024evolution}%
  \BibitemOpen
  \bibfield  {author} {\bibinfo {author} {\bibfnamefont {Jicheng}\ \bibnamefont
  {Ma}}\ and\ \bibinfo {author} {\bibfnamefont {Yunyan}\ \bibnamefont {Yang}},\
  }\bibfield  {title} {\enquote {\bibinfo {title} {Evolution of weights on a
  connected finite graph},}\ }\href@noop {} {\bibfield  {journal} {\bibinfo
  {journal} {arXiv: 2411.06393}\ } (\bibinfo {year}
  {2024}{\natexlab{b}})}\BibitemShut {NoStop}%
\bibitem [{\citenamefont {Bourne}\ \emph {et~al.}(2018)\citenamefont {Bourne},
  \citenamefont {Cushing}, \citenamefont {Liu}, \citenamefont {M\"{u}nch},\
  and\ \citenamefont {Peyerimhoff}}]{Bourne2018ollivier}%
  \BibitemOpen
  \bibfield  {author} {\bibinfo {author} {\bibfnamefont {David~P}\ \bibnamefont
  {Bourne}}, \bibinfo {author} {\bibfnamefont {David}\ \bibnamefont {Cushing}},
  \bibinfo {author} {\bibfnamefont {Shiping}\ \bibnamefont {Liu}}, \bibinfo
  {author} {\bibfnamefont {F}~\bibnamefont {M\"{u}nch}}, \ and\ \bibinfo
  {author} {\bibfnamefont {Norbert}\ \bibnamefont {Peyerimhoff}},\ }\bibfield
  {title} {\enquote {\bibinfo {title} {Ollivier-ricci idleness functions of
  graphs},}\ }\href@noop {} {\bibfield  {journal} {\bibinfo  {journal} {SIAM J.
  Discrete Math.}\ }\textbf {\bibinfo {volume} {32}},\ \bibinfo {pages}
  {1408--1424} (\bibinfo {year} {2018})}\BibitemShut {NoStop}%
\bibitem [{\citenamefont {Villani}(2021)}]{Villani2021topics}%
  \BibitemOpen
  \bibfield  {author} {\bibinfo {author} {\bibfnamefont {C{\'e}dric}\
  \bibnamefont {Villani}},\ }\href@noop {} {\emph {\bibinfo {title} {Topics in
  optimal transportation}}},\ Vol.~\bibinfo {volume} {58}\ (\bibinfo
  {publisher} {American Mathematical Soc.},\ \bibinfo {year}
  {2021})\BibitemShut {NoStop}%
\bibitem [{\citenamefont {Peixoto}(2014)}]{Peixoto2014efficient}%
  \BibitemOpen
  \bibfield  {author} {\bibinfo {author} {\bibfnamefont {Tiago~P.}\
  \bibnamefont {Peixoto}},\ }\bibfield  {title} {\enquote {\bibinfo {title}
  {Efficient monte carlo and greedy heuristic for the inference of stochastic
  block models},}\ }\href@noop {} {\bibfield  {journal} {\bibinfo  {journal}
  {Phys. Rev. E}\ }\textbf {\bibinfo {volume} {89}},\ \bibinfo {pages} {012804}
  (\bibinfo {year} {2014})}\BibitemShut {NoStop}%
\bibitem [{\citenamefont {Blondel}\ \emph {et~al.}(2008)\citenamefont
  {Blondel}, \citenamefont {Guillaume}, \citenamefont {Lambiotte},\ and\
  \citenamefont {Lefebvre}}]{blondel2008fast}%
  \BibitemOpen
  \bibfield  {author} {\bibinfo {author} {\bibfnamefont {Vincent~D}\
  \bibnamefont {Blondel}}, \bibinfo {author} {\bibfnamefont {Jean-Loup}\
  \bibnamefont {Guillaume}}, \bibinfo {author} {\bibfnamefont {Renaud}\
  \bibnamefont {Lambiotte}}, \ and\ \bibinfo {author} {\bibfnamefont {Etienne}\
  \bibnamefont {Lefebvre}},\ }\bibfield  {title} {\enquote {\bibinfo {title}
  {Fast unfolding of communities in large networks},}\ }\href@noop {}
  {\bibfield  {journal} {\bibinfo  {journal} {Journal of statistical mechanics:
  theory and experiment}\ }\textbf {\bibinfo {volume} {2008}},\ \bibinfo
  {pages} {P10008} (\bibinfo {year} {2008})}\BibitemShut {NoStop}%
\bibitem [{\citenamefont {Girvan}\ and\ \citenamefont
  {Newman}(2002)}]{girvan2002community}%
  \BibitemOpen
  \bibfield  {author} {\bibinfo {author} {\bibfnamefont {Michelle}\
  \bibnamefont {Girvan}}\ and\ \bibinfo {author} {\bibfnamefont {Mark~EJ}\
  \bibnamefont {Newman}},\ }\bibfield  {title} {\enquote {\bibinfo {title}
  {Community structure in social and biological networks},}\ }\href@noop {}
  {\bibfield  {journal} {\bibinfo  {journal} {Proceedings of the national
  academy of sciences}\ }\textbf {\bibinfo {volume} {99}},\ \bibinfo {pages}
  {7821--7826} (\bibinfo {year} {2002})}\BibitemShut {NoStop}%
\bibitem [{\citenamefont {Cordasco}\ and\ \citenamefont
  {Gargano}(2010)}]{cordasco2010community}%
  \BibitemOpen
  \bibfield  {author} {\bibinfo {author} {\bibfnamefont {Gennaro}\ \bibnamefont
  {Cordasco}}\ and\ \bibinfo {author} {\bibfnamefont {Luisa}\ \bibnamefont
  {Gargano}},\ }\bibfield  {title} {\enquote {\bibinfo {title} {Community
  detection via semi-synchronous label propagation algorithms},}\ }in\
  \href@noop {} {\emph {\bibinfo {booktitle} {2010 IEEE international workshop
  on: business applications of social network analysis (BASNA)}}}\ (\bibinfo
  {organization} {IEEE},\ \bibinfo {year} {2010})\ pp.\ \bibinfo {pages}
  {1--8}\BibitemShut {NoStop}%
\bibitem [{\citenamefont {Rosvall}\ and\ \citenamefont
  {Bergstrom}(2008)}]{rosvall2008maps}%
  \BibitemOpen
  \bibfield  {author} {\bibinfo {author} {\bibfnamefont {Martin}\ \bibnamefont
  {Rosvall}}\ and\ \bibinfo {author} {\bibfnamefont {Carl~T}\ \bibnamefont
  {Bergstrom}},\ }\bibfield  {title} {\enquote {\bibinfo {title} {Maps of
  random walks on complex networks reveal community structure},}\ }\href@noop
  {} {\bibfield  {journal} {\bibinfo  {journal} {Proceedings of the national
  academy of sciences}\ }\textbf {\bibinfo {volume} {105}},\ \bibinfo {pages}
  {1118--1123} (\bibinfo {year} {2008})}\BibitemShut {NoStop}%
\bibitem [{\citenamefont {Pons}\ and\ \citenamefont
  {Latapy}(2005)}]{pons2005computing}%
  \BibitemOpen
  \bibfield  {author} {\bibinfo {author} {\bibfnamefont {Pascal}\ \bibnamefont
  {Pons}}\ and\ \bibinfo {author} {\bibfnamefont {Matthieu}\ \bibnamefont
  {Latapy}},\ }\bibfield  {title} {\enquote {\bibinfo {title} {Computing
  communities in large networks using random walks},}\ }in\ \href@noop {}
  {\emph {\bibinfo {booktitle} {Computer and Information Sciences-ISCIS 2005:
  20th International Symposium, Istanbul, Turkey, October 26-28, 2005.
  Proceedings 20}}}\ (\bibinfo {organization} {Springer},\ \bibinfo {year}
  {2005})\ pp.\ \bibinfo {pages} {284--293}\BibitemShut {NoStop}%
\bibitem [{\citenamefont {Lee}\ and\ \citenamefont {Shin}(2023)}]{lee2023m}%
  \BibitemOpen
  \bibfield  {author} {\bibinfo {author} {\bibfnamefont {Dongjin}\ \bibnamefont
  {Lee}}\ and\ \bibinfo {author} {\bibfnamefont {Kijung}\ \bibnamefont
  {Shin}},\ }\bibfield  {title} {\enquote {\bibinfo {title} {I’m me, we’re
  us, and i’m us: Tri-directional contrastive learning on hypergraphs},}\
  }in\ \href@noop {} {\emph {\bibinfo {booktitle} {Proceedings of the AAAI
  conference on artificial intelligence}}},\ Vol.~\bibinfo {volume} {37}\
  (\bibinfo {year} {2023})\ pp.\ \bibinfo {pages} {8456--8464}\BibitemShut
  {NoStop}%
\bibitem [{\citenamefont {Markelle}\ \emph {et~al.}(2017)\citenamefont
  {Markelle}, \citenamefont {Longjohn},\ and\ \citenamefont
  {Nottingham}}]{Markelle2017uci}%
  \BibitemOpen
  \bibfield  {author} {\bibinfo {author} {\bibfnamefont {Kelly}\ \bibnamefont
  {Markelle}}, \bibinfo {author} {\bibfnamefont {Rachel}\ \bibnamefont
  {Longjohn}}, \ and\ \bibinfo {author} {\bibfnamefont {Kolby}\ \bibnamefont
  {Nottingham}},\ }\bibfield  {title} {\enquote {\bibinfo {title} {Uci machine
  learning repository},}\ }\href@noop {} {\bibfield  {journal} {\bibinfo
  {journal} {The UCI Machine Learning Repository}\ } (\bibinfo {year}
  {2017})}\BibitemShut {NoStop}%
\bibitem [{\citenamefont {Sen}\ \emph {et~al.}(2008)\citenamefont {Sen},
  \citenamefont {Namata}, \citenamefont {Bilgic}, \citenamefont {Getoor},
  \citenamefont {Galligher},\ and\ \citenamefont
  {Eliassi-Rad}}]{sen2008collective}%
  \BibitemOpen
  \bibfield  {author} {\bibinfo {author} {\bibfnamefont {Prithviraj}\
  \bibnamefont {Sen}}, \bibinfo {author} {\bibfnamefont {Galileo}\ \bibnamefont
  {Namata}}, \bibinfo {author} {\bibfnamefont {Mustafa}\ \bibnamefont
  {Bilgic}}, \bibinfo {author} {\bibfnamefont {Lise}\ \bibnamefont {Getoor}},
  \bibinfo {author} {\bibfnamefont {Brian}\ \bibnamefont {Galligher}}, \ and\
  \bibinfo {author} {\bibfnamefont {Tina}\ \bibnamefont {Eliassi-Rad}},\
  }\bibfield  {title} {\enquote {\bibinfo {title} {Collective classification in
  network data},}\ }\href@noop {} {\bibfield  {journal} {\bibinfo  {journal}
  {AI magazine}\ }\textbf {\bibinfo {volume} {29}},\ \bibinfo {pages} {93--93}
  (\bibinfo {year} {2008})}\BibitemShut {NoStop}%
\bibitem [{\citenamefont {Rossi}\ and\ \citenamefont
  {Ahmed}(2015)}]{rossi2015network}%
  \BibitemOpen
  \bibfield  {author} {\bibinfo {author} {\bibfnamefont {Ryan}\ \bibnamefont
  {Rossi}}\ and\ \bibinfo {author} {\bibfnamefont {Nesreen}\ \bibnamefont
  {Ahmed}},\ }\bibfield  {title} {\enquote {\bibinfo {title} {The network data
  repository with interactive graph analytics and visualization},}\ }in\
  \href@noop {} {\emph {\bibinfo {booktitle} {Proceedings of the AAAI
  conference on artificial intelligence}}},\ Vol.~\bibinfo {volume} {29}\
  (\bibinfo {year} {2015})\BibitemShut {NoStop}%
\bibitem [{\citenamefont {Chen}\ \emph {et~al.}(2003)\citenamefont {Chen},
  \citenamefont {Tian}, \citenamefont {Shen},\ and\ \citenamefont
  {Ouhyoung}}]{chen2003visual}%
  \BibitemOpen
  \bibfield  {author} {\bibinfo {author} {\bibfnamefont {Dingyun}\ \bibnamefont
  {Chen}}, \bibinfo {author} {\bibfnamefont {Xiaopei}\ \bibnamefont {Tian}},
  \bibinfo {author} {\bibfnamefont {Yute}\ \bibnamefont {Shen}}, \ and\
  \bibinfo {author} {\bibfnamefont {Ming}\ \bibnamefont {Ouhyoung}},\
  }\bibfield  {title} {\enquote {\bibinfo {title} {On visual similarity based
  3d model retrieval},}\ }in\ \href@noop {} {\emph {\bibinfo {booktitle}
  {Computer graphics forum}}},\ Vol.~\bibinfo {volume} {22}\ (\bibinfo
  {organization} {Wiley Online Library},\ \bibinfo {year} {2003})\ pp.\
  \bibinfo {pages} {223--232}\BibitemShut {NoStop}%
\bibitem [{\citenamefont {Grover}\ and\ \citenamefont
  {Leskovec}(2016)}]{grover2016node2vec}%
  \BibitemOpen
  \bibfield  {author} {\bibinfo {author} {\bibfnamefont {Aditya}\ \bibnamefont
  {Grover}}\ and\ \bibinfo {author} {\bibfnamefont {Jure}\ \bibnamefont
  {Leskovec}},\ }\bibfield  {title} {\enquote {\bibinfo {title} {node2vec:
  Scalable feature learning for networks},}\ }in\ \href@noop {} {\emph
  {\bibinfo {booktitle} {Proceedings of the 22nd ACM SIGKDD international
  conference on Knowledge discovery and data mining}}}\ (\bibinfo {year}
  {2016})\ pp.\ \bibinfo {pages} {855--864}\BibitemShut {NoStop}%
\bibitem [{\citenamefont {Veli{\v{c}}kovi{\'c}}\ \emph
  {et~al.}(2018)\citenamefont {Veli{\v{c}}kovi{\'c}}, \citenamefont {Fedus},
  \citenamefont {Hamilton}, \citenamefont {Li{\`o}}, \citenamefont {Bengio},\
  and\ \citenamefont {Hjelm}}]{velivckovic2018deep}%
  \BibitemOpen
  \bibfield  {author} {\bibinfo {author} {\bibfnamefont {Petar}\ \bibnamefont
  {Veli{\v{c}}kovi{\'c}}}, \bibinfo {author} {\bibfnamefont {William}\
  \bibnamefont {Fedus}}, \bibinfo {author} {\bibfnamefont {William~L}\
  \bibnamefont {Hamilton}}, \bibinfo {author} {\bibfnamefont {Pietro}\
  \bibnamefont {Li{\`o}}}, \bibinfo {author} {\bibfnamefont {Yoshua}\
  \bibnamefont {Bengio}}, \ and\ \bibinfo {author} {\bibfnamefont {R~Devon}\
  \bibnamefont {Hjelm}},\ }\bibfield  {title} {\enquote {\bibinfo {title} {Deep
  graph infomax},}\ }\href@noop {} {\bibfield  {journal} {\bibinfo  {journal}
  {arXiv:1809.10341}\ } (\bibinfo {year} {2018})}\BibitemShut {NoStop}%
\bibitem [{\citenamefont {Zhu}\ \emph {et~al.}(2020)\citenamefont {Zhu},
  \citenamefont {Xu}, \citenamefont {Yu}, \citenamefont {Liu}, \citenamefont
  {Wu},\ and\ \citenamefont {Wang}}]{zhu2020deep}%
  \BibitemOpen
  \bibfield  {author} {\bibinfo {author} {\bibfnamefont {Yanqiao}\ \bibnamefont
  {Zhu}}, \bibinfo {author} {\bibfnamefont {Yichen}\ \bibnamefont {Xu}},
  \bibinfo {author} {\bibfnamefont {Feng}\ \bibnamefont {Yu}}, \bibinfo
  {author} {\bibfnamefont {Qiang}\ \bibnamefont {Liu}}, \bibinfo {author}
  {\bibfnamefont {Shu}\ \bibnamefont {Wu}}, \ and\ \bibinfo {author}
  {\bibfnamefont {Liang}\ \bibnamefont {Wang}},\ }\bibfield  {title} {\enquote
  {\bibinfo {title} {Deep graph contrastive representation learning},}\
  }\href@noop {} {\bibfield  {journal} {\bibinfo  {journal} {arXiv:2006.04131}\
  } (\bibinfo {year} {2020})}\BibitemShut {NoStop}%
\bibitem [{\citenamefont {Zhang}\ \emph {et~al.}(2021)\citenamefont {Zhang},
  \citenamefont {Gao}, \citenamefont {Yu}, \citenamefont {Guo}, \citenamefont
  {Li},\ and\ \citenamefont {Yin}}]{zhang2021double}%
  \BibitemOpen
  \bibfield  {author} {\bibinfo {author} {\bibfnamefont {Junwei}\ \bibnamefont
  {Zhang}}, \bibinfo {author} {\bibfnamefont {Min}\ \bibnamefont {Gao}},
  \bibinfo {author} {\bibfnamefont {Junliang}\ \bibnamefont {Yu}}, \bibinfo
  {author} {\bibfnamefont {Lei}\ \bibnamefont {Guo}}, \bibinfo {author}
  {\bibfnamefont {Jundong}\ \bibnamefont {Li}}, \ and\ \bibinfo {author}
  {\bibfnamefont {Hongzhi}\ \bibnamefont {Yin}},\ }\bibfield  {title} {\enquote
  {\bibinfo {title} {Double-scale self-supervised hypergraph learning for group
  recommendation},}\ }in\ \href@noop {} {\emph {\bibinfo {booktitle}
  {Proceedings of the 30th ACM international conference on information \&
  knowledge management}}}\ (\bibinfo {year} {2021})\ pp.\ \bibinfo {pages}
  {2557--2567}\BibitemShut {NoStop}%
\bibitem [{\citenamefont {Kami{\'n}ski}\ \emph {et~al.}(2020)\citenamefont
  {Kami{\'n}ski}, \citenamefont {Pra{\l}at},\ and\ \citenamefont
  {Th{\'e}berge}}]{kaminski2020community}%
  \BibitemOpen
  \bibfield  {author} {\bibinfo {author} {\bibfnamefont {Bogumi{\l}}\
  \bibnamefont {Kami{\'n}ski}}, \bibinfo {author} {\bibfnamefont {Pawe{\l}}\
  \bibnamefont {Pra{\l}at}}, \ and\ \bibinfo {author} {\bibfnamefont
  {Fran{\c{c}}ois}\ \bibnamefont {Th{\'e}berge}},\ }\bibfield  {title}
  {\enquote {\bibinfo {title} {Community detection algorithm using hypergraph
  modularity},}\ }in\ \href@noop {} {\emph {\bibinfo {booktitle} {International
  Conference on Complex Networks and Their Applications}}}\ (\bibinfo
  {organization} {Springer},\ \bibinfo {year} {2020})\ pp.\ \bibinfo {pages}
  {152--163}\BibitemShut {NoStop}%
\bibitem [{\citenamefont {Ana}\ and\ \citenamefont
  {Jain}(2003)}]{ana2003robust}%
  \BibitemOpen
  \bibfield  {author} {\bibinfo {author} {\bibfnamefont {LN~Fred}\ \bibnamefont
  {Ana}}\ and\ \bibinfo {author} {\bibfnamefont {Anil~K}\ \bibnamefont
  {Jain}},\ }\bibfield  {title} {\enquote {\bibinfo {title} {Robust data
  clustering},}\ }in\ \href@noop {} {\emph {\bibinfo {booktitle} {2003 IEEE
  Computer Society Conference on Computer Vision and Pattern Recognition, 2003.
  Proceedings.}}},\ Vol.~\bibinfo {volume} {2}\ (\bibinfo {organization}
  {IEEE},\ \bibinfo {year} {2003})\ pp.\ \bibinfo {pages} {II--II}\BibitemShut
  {NoStop}%
\bibitem [{\citenamefont {Newman}\ and\ \citenamefont
  {Girvan}(2004)}]{newman2004finding}%
  \BibitemOpen
  \bibfield  {author} {\bibinfo {author} {\bibfnamefont {Mark~EJ}\ \bibnamefont
  {Newman}}\ and\ \bibinfo {author} {\bibfnamefont {Michelle}\ \bibnamefont
  {Girvan}},\ }\bibfield  {title} {\enquote {\bibinfo {title} {Finding and
  evaluating community structure in networks},}\ }\href@noop {} {\bibfield
  {journal} {\bibinfo  {journal} {Phys. Rev. E}\ }\textbf {\bibinfo {volume}
  {69}},\ \bibinfo {pages} {026113} (\bibinfo {year} {2004})}\BibitemShut
  {NoStop}%
\bibitem [{\citenamefont {Lancichinetti}\ and\ \citenamefont
  {Fortunato}(2009)}]{lancichinetti2009community}%
  \BibitemOpen
  \bibfield  {author} {\bibinfo {author} {\bibfnamefont {Andrea}\ \bibnamefont
  {Lancichinetti}}\ and\ \bibinfo {author} {\bibfnamefont {Santo}\ \bibnamefont
  {Fortunato}},\ }\bibfield  {title} {\enquote {\bibinfo {title} {Community
  detection algorithms: a comparative analysis},}\ }\href@noop {} {\bibfield
  {journal} {\bibinfo  {journal} {Phys. Rev. E}\ }\textbf {\bibinfo {volume}
  {80}},\ \bibinfo {pages} {056117} (\bibinfo {year} {2009})}\BibitemShut
  {NoStop}%
\end{thebibliography}%

\end{document}